\pgfplotsset{compat=1.10}
\newtheorem{theorem}{Theorem}
\newtheorem{lemma}{Lemma}
\newtheorem{definition}{Definition}
\newtheorem{corollary}{Corollary}
\theoremstyle{plain}
\numberwithin{theorem}{section}
\numberwithin{lemma}{section}
\numberwithin{definition}{section}
\numberwithin{corollary}{section}
\numberwithin{equation}{section}
\newcommand{\eps}{\varepsilon}
\newcommand{\argmin}{\operatorname{argmin}}
\renewcommand{\leq}{\leqslant}
\renewcommand{\geq}{\geqslant}
\renewcommand{\le}{\leqslant}
\renewcommand{\ge}{\geqslant}
\newcommand{\Oh}{\mathcal{O}}
\renewcommand{\algref}[1]{Algorithm~\ref{alg:#1}}
\newcommand{\thmref}[1]{Theorem~\ref{thm:#1}}
\newcommand{\thmrefs}[2]{Theorems~\ref{thm:#1} and~\ref{thm:#2}}
\newcommand{\lemref}[1]{Lemma~\ref{lem:#1}}
\newcommand{\lemrefs}[2]{Lemmas~\ref{lem:#1} and~\ref{lem:#2}}
\newcommand{\corref}[1]{Corollary~\ref{cor:#1}}
\newcommand{\defref}[1]{Definition~\ref{def:#1}}
\newcommand{\figref}[1]{Figure~\ref{fig:#1}}
\newcommand{\tabref}[1]{Table~\ref{tab:#1}}
\newcommand{\secref}[1]{Section~\ref{sec:#1}}
\newcommand{\eq}[1]{Equation~\eqref{eq:#1}}
\newcommand{\eqs}[2]{Equations~\eqref{eq:#1} and~\eqref{eq:#2}}
\newcommand{\Pro}[1]{\mathrm{Pr} [ #1 ]}
\newcommand{\Ex}[1]{\mathbb{E}\left[\,#1\,\right]}
\newcommand{\dif}{\,\mathrm{d}}
\newcommand{\betabound}{\beta \geq \tfrac{2k-1}{k-1} + \eps}
\newcommand{\Wlog}{W.\,l.\,o.\,g.\xspace}
\newcommand{\wlo}{w.\,l.\,o.\,g.\xspace}
\newcommand{\ie}{i.\,e.\xspace}
\newcommand{\eg}{e.\,g.\xspace}
\newcommand{\whp}{w.\thinspace h.\thinspace p.\xspace}
\title{Bounds on the Satisfiability Threshold for Power~Law Distributed Random SAT}
\author[1]{Tobias Friedrich}
\author[1]{Anton Krohmer}
\author[1]{Ralf Rothenberger}
\author[2]{Thomas Sauerwald}
\author[1]{Andrew M. Sutton}
\affil[1]{\small Hasso Plattner Institute, Potsdam, Germany\\
  \texttt{firstname.lastname@hpi.de}}
\affil[2]{\small University of Cambridge, Cambridge, United Kingdom\\
  \texttt{firstname.lastname@cl.cam.ac.uk}}
\date{}
\begin{document}
\clearpage\maketitle
\thispagestyle{empty}

\begin{abstract}
Propositional satisfiability (SAT) is one of the most fundamental problems in computer science.
The worst-case hardness of SAT
lies at the core of 
computational complexity theory.
The average-case analysis of SAT has triggered the development of
sophisticated rigorous and non-rigorous techniques for analyzing
random structures.

Despite a long line of research and substantial progress, nearly all theoretical work
on random SAT assumes a \emph{uniform} distribution on the variables.
In contrast, real-world instances often exhibit large fluctuations in variable occurrence.
This can be modeled by a \emph{scale-free} distribution of the variables,
which results in distributions closer to industrial SAT instances.

We study random $k$-SAT on $n$~variables, $m=\Theta(n)$~clauses, and
a power law distribution on the variable occurrences with exponent $\beta$.
We observe a satisfiability threshold at $\beta=(2k-1)/(k-1)$.
This threshold is tight in the sense that
instances with $\beta\leq(2k-1)/(k-1)-\varepsilon$
for any constant $\eps>0$ 
are \emph{unsatisfiable}
with high probability (\whp).
For $\beta\geq(2k-1)/(k-1)+\varepsilon$,
the picture is reminiscent of the uniform case:
instances are \emph{satisfiable} \whp\
for sufficiently small constant
clause-variable ratios $m/n$;
they are \emph{unsatisfiable}
above a ratio $m/n$
that depends on $\beta$.
\end{abstract}
\newpage

\section{Introduction}

Satisfiability of propositional formulas (SAT) is one of the most researched problems in theoretical computer science.
SAT is widely used to model practical problems such as bounded model checking, hardware and software verification, automated planning and scheduling, and circuit design.
Even \emph{large industrial instances} with millions of variables can often be
solved very efficiently by modern SAT solvers.
The structure of these industrial SAT instances appears to allow a much faster processing than
the theoretical worst-case of this NP-complete problem.
It is an open and widely discussed question which structural properties make a SAT instance easy to solve
for modern SAT solvers.

\medskip\noindent
{\bf Random SAT:}
For modeling typical inputs, we study random propositional formulas.
In random satisfiability, we have a distribution over Boolean formulas in conjunctive normal form (CNF).
The degree of a variable in a CNF formula is the number of disjunctive clauses in which that variable appears either positively or negatively.
Two interesting properties of random models are its \emph{degree distribution} and its \emph{satisfiability threshold}.
The degree distribution $F(x)$ of a formula~$\Phi$ is the fraction of variables that occur more than $x$ times (negated or unnegated).
A satisfiability threshold is a critical value around which the probability that a formula is satisfiable changes from~$0$ to~$1$.

\medskip\noindent
{\bf Uniform random SAT:}
In the classical uniform random model, the degree distribution is binomial.
On uniform random $k$-SAT, the \emph{satisfiability threshold conjecture}~\cite{Achlioptas2011geometry}
asserts if $\Phi$ is a formula drawn uniformly at random from the set of
all $k$-CNF formulas with $n$ variables and $m$ clauses, there exists
a real number $r_k$ such that
\[
\lim_{n \to \infty} \Pr\{\Phi \text{~is satisfiable}\} =
\begin{cases}
  1 & m/n < r_k;\\
  0 & m/n > r_k.
\end{cases}
\]
A well-known result of Friedgut~\cite{Friedgut1999thresholds} establishes that the transition is sharp, even though its location is not known exactly for all values of $k$ (and may also depend on $n$). For $k=2$, the critical threshold is $r_2 = 1$~\cite{chvatalreed92,FernandezdelaVega2001random2sat,Goerdt1996threshold}.
Recently, Coja-Oghlan and Panagiotou \cite{Coja-Oghlan:2014:AKT:2591796.2591822,kostathreshold} gave a sharp bound (up to lower order terms) with 
$r_k = 2^k \log 2 - \tfrac12 (1 + \log 2) \pm o_k(1)$.
Ding, Sly, and Sun~\cite{Ding:2015:PSC:2746539.2746619} derive an exact representation of the threshold for all $k\geq k_0$, where $k_0$ is a large enough constant.
Explicit bounds also exist for low values of $k$, e.g., $3.52 \leq r_3 \leq 
4.4898$~\cite{HajiaghayiSorkin2003threshold,Kaporis2006probabilistic,DBLP:journals/tcs/DiazKMP09},
and numerical estimates using the cavity method from statistical mechanics~\cite{mezard2002analytic} suggest that $r_3 \approx 4.26$.

\medskip\noindent
{\bf Other random SAT models:} In the regular random model~\cite{BDIS2006}, formulas are constructed at random, but the degree distribution is fixed: each literal appears exactly $\lfloor \frac{km}{2n} \rfloor$ or $\lfloor \frac{km}{2n} \rfloor + 1$ times in the formula. Similarly, Bradonjic and Perkins~\cite{bradonjic_et_al:LIPIcs:2014:4719} considered a random geometric $k$-SAT model in which $2n$ points are placed at random in $[0,1]^d$. Each point corresponds to a unique literal, and clauses are formed by all $k$-sets of literals that lie together within a ball of diameter $\Theta(n^{-1/d})$. Again, this model has a binomial variable distribution.

\medskip\noindent
{\bf Power law random SAT:}
Recently, there has been a paradigm shift when modeling real-world data. In many applications, it has been found that certain quantities do not cluster around a specific scale as suggested by a uniform distribution, but are rather inhomogeneous~\cite{clauset2009power,newman2005power}. In particular, the degree distribution in complex networks often follows a power law~\cite{Newman03}. This means that the fraction of vertices of degree $k$ is proportional to $k^{-\beta}$, where the constant $\beta$ depends on the network. To mathematically study the behavior of such networks, random graph models that generate a power law degree distribution have been proposed~\cite{BarabasiAlbert1999,PhysRevEDmitri,chunglupower,soderberg2002general}. 

While there has been a large amount of research on power law random graphs in the past few years~\cite{Hofstad},
there is little previous work on power law SAT formulas. Nevertheless, the observation that quantities follow a power law in real-world data has also emerged in the context of SAT~\cite{BDIS2006}.
As all aforementioned random SAT models assume strongly concentrated degree distributions, it was conjectured that this property might be modeled well by random formulas with a power law degree distribution.

To address this conjecture, and to help close the gap between the structure of uniform random and industrial instances, Ansótegui, Bonet, and Levy~\cite{ABL2009} recently proposed a power-law random SAT model. This model has been studied experimentally~\cite{ABL2009,AnsoteguiBL09,DBLP:conf/cade/AnsoteguiBGL14,DBLP:conf/ccia/AnsoteguiBGL15}, and empirical investigations found that (1) indeed the constraint graphs of many families of industrial instances obey a power-law and (2) SAT solvers that are constructed to specialize on industrial instances perform better on power-law formulas than on uniform random formulas.
To complement these experimental findings, we contribute with this paper the first theoretical results on this model.

\begin{figure}
  \def\plotwidth{5.8cm}
      \centering      
      \definecolor{darkgreen}{RGB}{0,128,0}
      \definecolor{darkred}{RGB}{128,0,0}
    
      \begin{tikzpicture}
       
        \begin{axis}[
          name=plot1,
          width=\plotwidth,
          scale only axis,
          no marks,
          xlabel={$\beta$},
          ylabel={$m/n$},
          ylabel style={yshift=-2.5mm},
          yticklabels={,,},
          y tick style={draw=none},
          extra y ticks={2.5},
          extra y tick style={ticklabel pos=right},
          xtick={2.5},
          xticklabels={\tiny $\tfrac{2k-1}{k-1}$},
          xmin=2.2,
          xmax=2.75,
          ymin=1,
          ymax=4.2,          
          ]
          
          \plot[line width=1.6pt,color=black,dashed,domain=2.5:4]
          table{singleflip.dat} 
          node[pos=0.2, inner sep=0pt] (singleflip-anchor) {};
          
          \addplot[line width=1.6pt, color=black, dashed] 
          coordinates {(2.5,2.5) (4,2.5)} 
          node[pos=0.086,inner sep=0pt] (sat-anchor-r) {};
          
          \addplot[line width=1.6pt, color=black, dashed] 
          coordinates {(2.5,0) (2.5,3.5)} 
          node[pos=0.8, inner sep=0pt] (unsat-anchor) {}
          node[pos=0.4, inner sep=0pt] (sat-anchor) {};

          \path[name path=left,draw=none] 
          (axis cs:0,0) -- (axis cs:0,\pgfkeysvalueof{/pgfplots/ymax});
          
          \path[name path=bottom,draw=none] 
          (axis cs:0,0) -- (axis cs:\pgfkeysvalueof{/pgfplots/xmax},0);
          
          \path[name path=unsat,draw=none] 
          (axis cs:2.5,0) -- (axis cs:2.5,2.5) -- (axis cs:\pgfkeysvalueof{/pgfplots/xmax},2.5);

          \path[name path=sat,draw=none] 
          (axis cs:2.5,0) -- (axis cs:2.5,2.5) -- (axis cs:\pgfkeysvalueof{/pgfplots/xmax},2.5);

          \addplot[draw=none,name path=singleflip] 
          table{singleflip-extended.dat};

          \addplot[fill=darkred!30] 
          fill between[of=left and singleflip];
          
          \addplot[fill=darkgreen!30] 
          fill between[of=sat and bottom];
          
          \path[name path=join,draw=none] 
          (axis cs:2.5,2.5) -- (axis cs:\pgfkeysvalueof{/pgfplots/xmax},2.5);
          
          \addplot[fill=black!30] fill between[of=singleflip and join];

          \node[below,draw,fill=white] (T1) at (axis cs:2.625,2) 
          {\sffamily \bfseries \hyperref[thm:sat]{Theorem}~\ref{thm:sat}};
          \draw[-latex,thick, bend left] (T1) to (sat-anchor);
          \draw[-latex,thick] (T1) to (sat-anchor-r);
          
          \node[left,draw,fill=white] (T2) at (axis cs:2.45,2) 
          {\sffamily \bfseries \hyperref[thm:unsat]{Theorem}~\ref{thm:unsat}};
          \draw[-latex,thick] (T2) to (unsat-anchor);
          
          \node[left,draw,fill=white] (T3) at (axis cs:2.45,3.4) 
          {\sffamily \bfseries \hyperref[thm:singleflip-powerlaw]{Theorem}~\ref{thm:singleflip-powerlaw}};
          \draw[-latex,thick] (T3) [out=45, in=150] to (singleflip-anchor);
          
          \node[above left, anchor=south east] at 
          (axis cs:\pgfkeysvalueof{/pgfplots/xmax},\pgfkeysvalueof{/pgfplots/ymin}) 
          {\emph{satisfiable}};
          
          \node[below right, anchor=north west] at 
          (axis cs:\pgfkeysvalueof{/pgfplots/xmin},\pgfkeysvalueof{/pgfplots/ymax}) 
          {\emph{unsatisfiable}};

          \node[below left, anchor=north east, yshift=-8mm] at 
          (axis cs:\pgfkeysvalueof{/pgfplots/xmax},\pgfkeysvalueof{/pgfplots/ymax}) 
          {\emph{unknown}};

      \end{axis}
      \begin{axis}[
        at=(plot1.right of south east),
        anchor=left of south west,
        xshift=10mm,
          width=\plotwidth,
          scale only axis,
          xlabel={$\beta$},
          extra x ticks={2.5},
          extra x tick labels={\vphantom{\tiny $\frac{2k-1}{k-1}$}},
          extra x tick style={draw=none},
          ylabel={$m/n$},
          ylabel style={yshift=-2mm},
          xmin=2,
          xmax=3,
          ymin=1,
          ymax=5,
          ]

            \addplot[line width=1.6, black, dashed, name path=US]
            table{unsat-sat-boundary.dat};
            \addplot[line width=1.6, black, dashed, name path=HS]
            table{hard-sat-boundary.dat}{};
            \addplot[line width=1.6, black, dashed, name path=HU]
            table{hard-unsat-boundary.dat};

          \node[above left, anchor=south east] at 
          (axis cs:\pgfkeysvalueof{/pgfplots/xmax},\pgfkeysvalueof{/pgfplots/ymin}) 
          {\emph{satisfiable}};
          
          \node[below right, anchor=north west] at 
          (axis cs:\pgfkeysvalueof{/pgfplots/xmin},\pgfkeysvalueof{/pgfplots/ymax}) 
          {\emph{unsatisfiable}};

          \node[below left, anchor=north east, yshift=-8mm] (unknown) at 
          (axis cs:\pgfkeysvalueof{/pgfplots/xmax},\pgfkeysvalueof{/pgfplots/ymax}) 
          {\emph{unknown}};

          \node[below= 0mm of unknown,  xshift=-2.7mm, anchor=north, outer sep=0pt, inner sep=0pt] 
          {\footnotesize \emph{(solver timeout)}};

          \path[name path=top] (axis cs:\pgfkeysvalueof{/pgfplots/xmin},\pgfkeysvalueof{/pgfplots/ymax}) -- (axis cs:\pgfkeysvalueof{/pgfplots/xmax},\pgfkeysvalueof{/pgfplots/ymax});
          
          \path[name path=left] (axis cs: \pgfkeysvalueof{/pgfplots/xmin},\pgfkeysvalueof{/pgfplots/ymin}) -- (axis cs:\pgfkeysvalueof{/pgfplots/xmin},\pgfkeysvalueof{/pgfplots/ymax});
          \path[name path=bottom] (axis cs:\pgfkeysvalueof{/pgfplots/xmin},\pgfkeysvalueof{/pgfplots/ymin}) -- (axis cs:\pgfkeysvalueof{/pgfplots/xmax},\pgfkeysvalueof{/pgfplots/ymin});
          \path[name path=right] (axis cs:\pgfkeysvalueof{/pgfplots/xmax},\pgfkeysvalueof{/pgfplots/ymin}) -- (axis cs:\pgfkeysvalueof{/pgfplots/xmax},\pgfkeysvalueof{/pgfplots/ymax});

            \addplot[fill=black!30] 
            fill between[of=HU and right, soft clip={domain y=3:5}];
            \addplot[fill=black!30] 
            fill between[of=HU and HS];          
            \addplot[fill=black!30] 
            fill between[of=HS and top, soft clip={domain=2.6:3.5}];

            \addplot [fill=darkgreen!30] 
            fill between[of=bottom and HS,soft clip={domain=2.5:3.5}];
            \addplot [fill=darkgreen!30] 
            fill between[of=bottom and US,soft clip={domain=0:2.6}];

            \addplot [fill=darkred!30] 
            fill between[of=HU and top];
            \addplot [fill=darkred!30] 
            fill between[of=left and US];
            \addplot [fill=darkred!30]
            fill between[of=left and HU];

      \end{axis}
    \end{tikzpicture}

    \caption{Illustration of our asymptotic results for the power law satisfiability threshold location when $n\to\infty$ (left) compared with empirical results for randomly generated power law $3$-SAT formulas on $n=10^6$ variables checked with the SAT solver MiniSAT (right). The timeout was set to one hour.}

    \label{fig:illustration}

\end{figure}
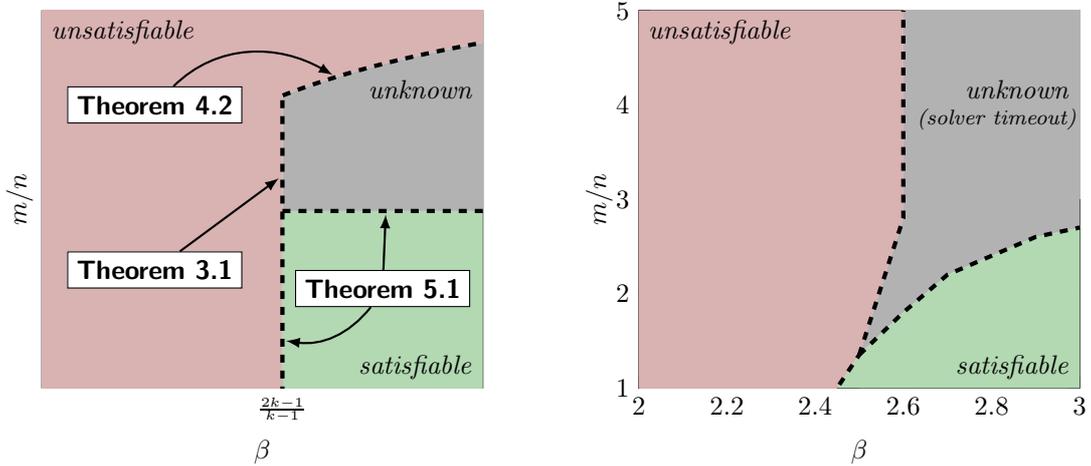

\medskip\noindent
{\bf Our results:}
We study random $k$-SAT on $n$ variables and $m=\Theta(n)$ clauses.
Each clause contains $k= \Theta(1)$ different, independently sampled variables. 
Each variable $x_i$ is chosen with non-uniform probability $p_i$ and
negated with probability $\nicefrac{1}{2}$. A formal definition
can be found in Section~\ref{sec:prelim}.
We first study sufficient conditions under which the resulting $k$-SAT
instances are \emph{unsatisfiable}. 
Assume a probability distribution~$\vec p$ on the variables where $p_i$ is non-decreasing in $i\in\{1,\ldots, n\}$.
If the $k$ most frequent variables are sufficiently common, we prove in
\secref{unsat} the following statement:
\begin{restatable*}{theorem}{statethmunsat} \label{thm:unsat}
Let $\Phi$ be a random $k$-SAT formula with probability distribution~$\vec p$ on the variables (c.f. \defref{randomSAT}),
with $k\geq2$ and $\tfrac mn = \Omega(1)$. If $p_{n-k+1} = \Omega( (\tfrac{\log n}{n})^{1/k})$,
then $\Phi$ is \whp\ unsatisfiable.
\end{restatable*}
Our focus are power law distributions with some exponent~$\beta$.
\thmref{unsat} implies that power law random $k$-SAT formulas with
$\beta=\tfrac{2k-1}{k-1} - \eps$ for an arbitrary constant $\varepsilon > 0$ are unsatisfiable with high probability\footnote{We say that an event $E$ holds {\em \whp}, if there exists a $\delta > 0$ such that $\Pr[E] \geq 1- \Oh(n^{-\delta})$.}, cf.~\corref{unsat}.

In Section~\ref{sec:flip} we show that something similar holds for the clause-variable ratio $\tfrac{m}{n}$, i.e. power law random $k$-SAT formulas with $\tfrac{m}{n}$ bigger than some constant are unsatisfiable with high probability.
Although this already follows from basic observations, we derive a better bound on the value of the constant.
\begin{restatable*}{theorem}{statethmflip}\label{thm:singleflip-general}
Let $\Phi$ be a random $k$-SAT formula with probability distribution~$\vec p$ on the variables (c.f. \defref{randomSAT}),
with $k\geq2$ and $r=\tfrac mn$.
With high probability, $\Phi$ is unsatisfiable if
\[\left(1-\tfrac{1}{2^k}\right)^r\left[\prod_{i=1}^{n}{\left[2-\left(1-\frac{k\cdot p_i}{2^k-1}\frac{1}{\left(1-\tfrac12 k^2 \|\vec p \|^2_2\right)}\right)^m\right]}\right]^{\frac1n}<1.\]
\end{restatable*}
In \secref{sat} we prove the following positive result, which complements our picture of the satisfiability landscape:
\begin{restatable*}{theorem}{statethmsat} \label{thm:sat}
Let $\Phi$ be a random $k$-SAT formula whose variable probabilities follow a power law distribution (c.f. \defref{general}). If the power law exponent is $\betabound$ for an arbitrary $\eps>0$, $\Phi$ is satisfiable with high probability if $\tfrac mn$ is a small enough constant.
\end{restatable*}
Together our main theorems prove that random $k$-SAT instances whose variables follow power law distributions do not only exhibit a phase transition for some clause-variable ratio $r=\tfrac mn$, but also around the power law exponent $\beta=\tfrac{2k-1}{k-1}$.
\figref{illustration} contains an overview of our results.
To prove these statements, we borrow tools developed for the uniform random SAT model. Note, however, that many of their common techniques like the differential equation method seem difficult to apply
to non-uniform distributions; as removing a variable results in a more complex rescaling of the rest of the distribution. It is therefore crucial to perform careful operations on the formulas that leave the distribution of variables intact. To this end, we use techniques known from the analysis of power law random graphs.

\medskip\noindent
{\bf Clause length:}
We focus on power law variable distributions but fix the length of every clause to $k \geq 2$.
Power law models have also been proposed in which clause length is distributed by a power law as well~\cite{ABL2009,AnsoteguiBL09}. As long as there is a constant \emph{minimum clause length} $k_{\min} \geq 2$, our results can be extended to this case in the following way.

If the clause lengths are distributed as a power law, there will appear $\Theta(n)$ clauses of length $k_{\min}$, and all other clauses are of larger size. In that case, \thmrefs{unsat}{sat} are directly applicable to the linear number of clauses with size $k_{\min}$ (obtaining different hidden constants); and we have that the formula is satisfiable with high probability if $\beta \geq \tfrac{2k_{\min} - 1}{k_{\min} -1} + \eps$ and $m/n$ is a small enough constant. On the other hand, the formula is unsatisfiable with high probability, if  $\beta \leq \tfrac{2k_{\min} - 1}{k_{\min} -1} - \eps$. Consequently, the satisfiability of the formula does (asymptotically) not depend on the second power law.

\section{Definition of the Model and Preliminaries}
\label{sec:prelim}
We analyze random $k$-SAT on $n$ variables and $m=\Theta(n)$ clauses, where $k \geq 2$. The constant $r := \tfrac mn$ is called {\em clause-variable ratio} or {\em constraint density}. We denote by $x_1, \ldots, x_n$ the Boolean variables. A clause is a disjunction of $k$ literals $\ell_1 \lor \ldots \lor \ell_k$, where each literal assumes a (possibly negated) variable. Finally, a formula $\Phi$ in conjunctive normal form is a conjunction of clauses $c_1 \land \ldots \land c_m$. We conveniently interpret a clause $c$ both as a Boolean formula and as a set of literals. Following standard notation, we write $|\ell|$ to refer to the indicator of the variable corresponding to literal $\ell$. We say that $\Phi$ is satisfiable if there exists an assignment of variables $x_1, \ldots, x_n$ such that the formula evaluates to $1$.

\begin{definition}[Random $k$-SAT]
\label{def:randomSAT} 
Let $m,n$ be given, and consider any probability distribution $\vec p$ on $n$ variables with $\sum_{i=1}^n p_i = 1$. To construct a random SAT formula $\Phi$, we sample $m$ clauses independently at random. Each clause is sampled as follows:

\begin{enumerate}
	\item Select $k$ variables independently at random from the distribution $\vec p$. Repeat until no variables coincide.
	\item Negate each of the $k$ variables independently at random with probability $\nicefrac{1}{2}$.
\end{enumerate}

\end{definition}
Observe that by setting $p_i = \tfrac 1n$ for all $i$, we obtain again the uniform random SAT model. 
The probability to draw a specific clause $c$ is
\begin{equation}
\frac{\prod_{\ell \in c} p_{|\ell|}}{2^k\sum_{J\in{\mathcal{P}_k\left(\left\{1,2,\ldots,n\right\}\right)}}{\prod_{j\in{J}}{p_j}}}, \label{eq:clause-prob}
\end{equation}
where $\mathcal{P}_k(\cdot)$ denotes the set of cardinality-$k$ elements of the power set.
The factor $2^k$ in the denominator comes from the different possibilities to negate variables. Note that $k!\sum_{J\in{\mathcal{P}_k\left(\left\{1,2,\ldots,n\right\}\right)}}{\prod_{j\in{J}}{p_j}}$ is the probability of choosing a $k$-clause that contains no variable more than once. 
To see that this probability is almost $1$ for most distributions, we apply the following result from \cite{alistarh2015lock}.
\begin{lemma}[Non-Uniform Birthday Paradox]
\label{lem:birthday}
Let $\vec p = (p_1, \ldots, p_n)$ be any probability distribution on $n$ items. Assume you sample $t$ items from $\vec p$. Let $\mathcal{E}(t)$ be the event that there is a collision, i.e.\ that at least 2 of $t$ items are equal. Then,
\[\Pr[\mathcal{E}(t)] \leq \tfrac12 t^2 \|\vec p \|^2_2 = \tfrac12 t^2 \sum_{i=1}^n p_i^2.\]
\end{lemma}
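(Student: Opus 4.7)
The plan is to prove the bound via a straightforward union bound over all pairs of sampled items. Index the $t$ samples by $X_1,\ldots,X_t$, where each $X_s$ is an independent draw from $\vec p$. Then $\mathcal{E}(t)$ is the event $\bigcup_{1\le s<s'\le t}\{X_s = X_{s'}\}$, so by the union bound
\[
\Pr[\mathcal{E}(t)] \;\le\; \sum_{1\le s<s'\le t} \Pr[X_s = X_{s'}].
\]

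Next I would compute the per-pair collision probability. Since the draws are independent, for any fixed pair $(s,s')$,
\[
\Pr[X_s = X_{s'}] \;=\; \sum_{i=1}^n \Pr[X_s = i]\,\Pr[X_{s'} = i] \;=\; \sum_{i=1}^n p_i^2 \;=\; \|\vec p\|_2^2.
\]
Plugging this into the union bound gives $\Pr[\mathcal{E}(t)] \le \binom{t}{2}\|\vec p\|_2^2 \le \tfrac12 t^2 \|\vec p\|_2^2$, which is exactly the claimed inequality.

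Since the argument is essentially a two-line union bound, there is no real obstacle: the only thing to be careful about is that the per-pair probability factorizes across coordinates, which follows from independence of the draws (this is where the assumption that the samples are i.i.d.\ from $\vec p$ is used). The bound $\binom{t}{2}\le t^2/2$ is loose but gives the clean stated form; a slightly tighter form $\tfrac12 t(t-1)\|\vec p\|_2^2$ is available but unnecessary for the applications in this paper. No concentration machinery or more refined tool is needed.
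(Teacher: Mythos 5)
Your proof is correct: the union bound over the $\binom{t}{2}$ pairs, combined with the per-pair collision probability $\sum_i p_i^2$ (which factorizes by independence of the draws), gives exactly the stated bound. The paper itself does not prove this lemma but imports it from the cited reference, and your argument is the standard one that establishes it.
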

The probability that a sampled $k$-clause thereby has collisions is at most $\tfrac12 k^2 \|\vec p\|_2^2$; so for $\|\vec p\|_2^2 = o(1)$ and constant $k$ we obtain that the probability to draw a specific clause $c$ is
\begin{equation}
(1+o(1))\frac{k!}{2^k} \prod_{\ell \in c} p_{|\ell|}. \label{eq:clausesample}
\end{equation}

\medskip\noindent
{\bf Power law Distributions.}
In this paper, we are mostly concerned with distributions $p_i$ that follow a power law. To this end, we define two models: A {\em general} model to capture most power law distributions (which is harder to analyze), and a {\em concrete} model that gives us one instance of $\vec p$ depending only on $n$ that can be used to compute precise leading constants. We use the general model to derive some asymptotic results; and the concrete model to compare with the uniform random SAT model and for the experiments.

Before we define these two models, let us establish the concept of a {\em weight} $w_i$ of a variable~$x_i$. The weight gives us (roughly) the expected number of times the variable appears in the formula. That is, 
 \[p_i := \frac{w_i}{\sum_j w_j}.\]
Thus, fixing the weights $\vec w = (w_1, \ldots, w_n)$ also fixes the probability distribution $\vec p$. 
It is important to distinguish between the initial distribution of variables $\vec p$ and modified distributions that may arise as a result of stochastic considerations. For instance, the smallest-weight variable in a clause is clearly not distributed according to $\vec p$ (except in $1$-SAT). To avoid confusion, we identify a variable with its weight, as the weights stay fixed throughout the analysis. For convenience, we further assume \wlo that the variables are ordered increasingly by weight, \ie for $i \leq j$ we have $w_i \leq w_j$. Note that our definition of power law ensures that for $\beta>2$, we have $\sum_j w_j = \Theta(n)$.

We are now ready to define the two models.
\begin{definition}[General Power Law]
\label{def:general}
Let the weights $\vec w := w_1, \ldots, w_n$ be given, and let $W$ be a weight selected uniformly at random.
We say that $\vec w$ follows a power law with exponent $\beta$, if $w_1 = \Theta(1)$, $w_n = \Theta(n^{\frac{1}{\beta-1}})$, and for all $w \in [w_1, w_n]$ it holds
\begin{equation}
F(w) := \Pr[W \geq w] = \Theta(w^{1-\beta}) \label{eq:plw}
\end{equation}
\end{definition}
Whenever we need the explicit constants bounding the distribution function, we refer to them by $\alpha_1, \alpha_2$ as in
\begin{equation} \alpha_1 w^{1-\beta} \leq F(w) \leq \alpha_2 w^{1-\beta}. \label{eq:sandwhich} \end{equation}
We point out that \defref{general} assumes a deterministic weight sequence; but it can be easily generalized to also support randomly generated weights.

For the concrete model, we define the weights as follows.
\begin{definition}[Concrete Power Law]
\label{def:concrete}
Given a power law exponent $\beta$, we call $\vec w$ the concrete power law sequence, if
 \begin{equation}
\label{eq:exactplw}w_{n-i+1} := (\tfrac ni)^{\frac{1}{\beta-1}}.
\end{equation}
\end{definition}
One can check that for these concrete weights, it holds $n \cdot F(w) = \lfloor nw^{1 -\beta} \rfloor$, so in a sense, they are a canonical choice for producing a power law weight distribution.

To analyze power law distributions, we often make use of the following result of  Bringmann,
Keusch, and Lengler~\cite[Lemma~B.1]{BringmannKeuschLengler}, which allows replacing sums by integrals.
\begin{theorem}[\cite{BringmannKeuschLengler}]
\label{thm:karl}
Let $f\colon \mathbb R \rightarrow \mathbb R$ be a continuously differentiable function, and let $F^>(w) := \Pr[W > w]$. Then, for any $0 \leq \underline{w} \leq \bar w$,
\[\sum_{i\in [n], \underline w \leq w_i \leq \bar w} \tfrac1n f(w_i) = f(\underline w) \cdot F(\underline w)  - f(\bar w) \cdot  F^>(\bar w) + \int_{\underline w}^{\bar w} f'(w) \cdot F(w) \dif w. \]
\end{theorem}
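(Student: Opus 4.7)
The plan is to derive the identity by Abel summation, i.e.\ by rewriting each $f(w_i)$ as $f(\underline w)$ plus an integral of $f'$ and then interchanging the order of summation and integration. This is essentially an integration-by-parts identity against the empirical counting measure of the weights, and the key data to track are the two counting identities $nF(w) = |\{i : w_i \geq w\}|$ and $nF^{>}(w) = |\{i : w_i > w\}|$.

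First, using that $f$ is continuously differentiable, I would write for every index $i$ with $\underline w \leq w_i \leq \bar w$
\[
f(w_i) = f(\underline w) + \int_{\underline w}^{w_i} f'(w)\dif w = f(\underline w) + \int_{\underline w}^{\bar w} f'(w)\,\mathbbm{1}[w \leq w_i]\dif w.
\]
Summing this over the index set $I := \{i : \underline w \leq w_i \leq \bar w\}$ and dividing by $n$, the constant contribution equals $\tfrac{1}{n}|I|\cdot f(\underline w) = (F(\underline w) - F^{>}(\bar w))\cdot f(\underline w)$.

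Second, I would swap the finite sum and the integral in the remaining term and observe that, for every $w \in [\underline w,\bar w]$, the counted set $\{i \in I : w_i \geq w\}$ has cardinality $n(F(w) - F^{>}(\bar w))$. Hence the remaining term equals
\[
\int_{\underline w}^{\bar w} f'(w)\bigl(F(w) - F^{>}(\bar w)\bigr)\dif w = \int_{\underline w}^{\bar w} f'(w) F(w)\dif w - F^{>}(\bar w)\bigl(f(\bar w) - f(\underline w)\bigr),
\]
where the second equality is just the fundamental theorem of calculus applied to $\int f'\dif w$. Adding this to the constant contribution, the two copies of $F^{>}(\bar w) f(\underline w)$ cancel and one is left with $f(\underline w)F(\underline w) - f(\bar w) F^{>}(\bar w) + \int_{\underline w}^{\bar w} f'(w) F(w)\dif w$, which is the claimed identity.

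There is no conceptually hard step; the only obstacle is a careful bookkeeping of strict versus non-strict inequalities at the endpoints, which is exactly what forces the $\geq$-tail $F$ to appear at the lower limit $\underline w$ while the strict tail $F^{>}$ appears at the upper limit $\bar w$. This asymmetry stems from the fact that the sum on the left side includes both endpoints, so the lower endpoint is fully retained by $F$ while at the upper endpoint the contribution of $w_i = \bar w$ is absorbed into the boundary term $-f(\bar w)F^{>}(\bar w)$ rather than left in $F^{>}(\bar w)$. Once this is handled correctly the rest is a routine Fubini-and-cancellation argument.
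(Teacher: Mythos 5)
Your proof is correct. Note that the paper itself gives no proof of this statement: it is imported verbatim as Lemma~B.1 of Bringmann, Keusch, and Lengler, so there is no in-paper argument to compare against. Your Abel-summation derivation is a valid, self-contained verification: writing $f(w_i)=f(\underline w)+\int_{\underline w}^{\bar w}f'(w)\,\mathds{1}[w\le w_i]\dif w$, using $|\{i:\underline w\le w_i\le\bar w\}|=n\bigl(F(\underline w)-F^{>}(\bar w)\bigr)$ and $|\{i:w\le w_i\le\bar w\}|=n\bigl(F(w)-F^{>}(\bar w)\bigr)$ for $w\in[\underline w,\bar w]$, and swapping the finite sum with the integral all check out, and the cancellation of the two $F^{>}(\bar w)f(\underline w)$ terms yields exactly the stated identity. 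Your closing remark about the strict/non-strict asymmetry at the two endpoints is also the right thing to flag, since it is the only place where a careless derivation could go wrong; one could add that replacing $\mathds{1}[w\le w_i]$ by $\mathds{1}[w<w_i]$ changes the integrand only on a finite set and hence does not affect the integral.
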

Using this theorem, the following corollary can be shown: 
\begin{corollary}\label{cor:sumweights} Let the variables $w_i$ be power law distributed with exponent $\beta > 2$, and define $W_{\geq w} := \sum_{i\in[n]\colon w_i \geq w} w_i$.  Then, $W_{\geq w} = \Theta( nw^{2-\beta})$.
\end{corollary}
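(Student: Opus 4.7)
The plan is to apply \thmref{karl} with $f(x) = x$, lower limit $\underline w = w$, and upper limit $\bar w = w_n$. Since $w_n$ is the largest weight in the sequence, $F^>(w_n) = 0$, so the identity collapses to
\[\tfrac{1}{n}W_{\geq w} = w\cdot F(w) + \int_w^{w_n} F(u)\,\dif u.\]

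Next I would substitute the power law estimate $F(u) = \Theta(u^{1-\beta})$ from \eqref{eq:plw} (equivalently, use the two-sided envelope \eqref{eq:sandwhich}). The boundary term is immediately $wF(w) = \Theta(w^{2-\beta})$. For the integral, $\int u^{1-\beta}\,\dif u = u^{2-\beta}/(2-\beta)$, so evaluating between $w$ and $w_n$ and using $\beta > 2$ gives
\[\int_w^{w_n} F(u)\,\dif u = \Theta\!\bigl(\tfrac{w^{2-\beta} - w_n^{2-\beta}}{\beta - 2}\bigr) = O(w^{2-\beta}),\]
where the last step uses $w \leq w_n$, which forces $w^{2-\beta} \geq w_n^{2-\beta}$. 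Combining the two nonnegative contributions, the boundary term alone already gives the matching $\Omega(w^{2-\beta})$ lower bound, while their sum is $O(w^{2-\beta})$; hence $\tfrac{1}{n}W_{\geq w} = \Theta(w^{2-\beta})$, and multiplying by $n$ yields the claim.

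I do not anticipate any real obstacle. The only point requiring mild care is that the implicit constants in the $\Theta$-bounds must be uniform in $w$ (over the relevant range $w \in [w_1, w_n]$); this uniformity is exactly what \eqref{eq:sandwhich} provides, so the envelope transfers directly through the monotone integral and the boundary term. In short, the corollary is a routine specialization of \thmref{karl} with the identity weight function.
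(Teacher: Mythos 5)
Your proof is correct and follows essentially the same route as the paper: apply Theorem~\ref{thm:karl} with $f(x)=x$ on $[w,w_n]$, note the $F^>(w_n)$ term vanishes, and bound the boundary term and integral via the envelope~\eqref{eq:sandwhich}. The only (harmless) difference is that you obtain the lower bound from the boundary term alone, which is a slight simplification of the paper's symmetric computation.
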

\begin{proof}
Observe that $\sum_{w' \geq w} w' = \sum_{i \in [n], w_i \geq w} w_i$.  We apply \thmref{karl} to obtain
\begin{align*}
\tfrac 1n \sum_{i \in [n], w_i \geq w} w_i &= w \cdot F(w) + \int_{w}^{w_n} F(v) \dif{} v \\
&\leq \alpha_2 w^{2-\beta} + [\tfrac{\alpha_2}{2-\beta} v^{2-\beta}]_w^{w_n} \\
&\leq \alpha_2\tfrac{\beta-1}{\beta-2} w^{2-\beta}.
\end{align*}
In a similar fashion, one may show that $\tfrac 1n \sum_{i \in [n], w_i \geq w} w_i \geq \alpha_1\tfrac{\beta-1}{\beta-2}(1-o(1)) w^{2-\beta}$.
\end{proof}

Hence, $\sum_j w_j = W_{\geq w_1} = \Theta(n)$ and therefore $p_i = \Theta(\tfrac{w_i}n)$.
Finally, we denote by $V$ the random variable describing the weight of a SAT variable chosen according to a power law distribution $p_i$, that is, $\Pr[V = w] = \sum_{i} p_i \cdot \mathds{1}[w_i = w]$, where $\mathds{1}$ denotes the indicator variable of the event. Note that this is not equivalent to $W$, since there is a subtle difference in the two random processes: $W$ is a random variable drawn uniformly at random from $w_1, \ldots, w_n$, whereas $V$ is a random variable drawn from the same set, but with the non-uniform distribution $p_1, \ldots, p_n$. Hence, by \corref{sumweights},
\begin{equation}
\Pr[V \geq w]
= \Theta(w^{2-\beta}) . \label{eq:varsample}
\end{equation}

Using \thmref{karl}, we can show that the probability to draw a certain clause $c$ is as given by \eq{clausesample} for \defref{general} with exponent $\beta>2$, since
\[\|\vec p\|^2_2 = \sum_{i=1}^n p_i^2 = \Theta(n^{-2}) \sum_{i=1}^n w_i^2 = \Theta(n^{-1}) \cdot n^{\frac{3-\beta}{\beta-1}} = o(1).\]

It remains to show that using a power law distribution in \defref{randomSAT} indeed results in a power law distribution of variable occurrences. Ansótegui et al.~\cite{AnsoteguiBL09} provide a proof sketch for this fact, we prove it rigorously.
\begin{theorem} \label{thm:frequencies}
Let $\Phi$ be a random $k$-SAT formula that follows an arbitrary power law distribution with exponent $\beta$ (c.f. \defref{general}) and $m=\Theta(n)$.
Then, there are \mbox{$d_{\min}=\Theta\left(w_{\min}\right)$} and $d_{\max}=\Theta\left(w_{\max}\right)$, such that for all $d_{\min}\le d\le d_{\max}$ \whp it holds that
\[N_{\ge d}=\Theta(n\cdot d^{1-\beta}),\] 
where $N_{\ge d}$ is the number of variables that appear at least $d$ times in $\Phi$.
\end{theorem}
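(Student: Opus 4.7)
The degree $D_i$ of variable $x_i$ is, up to rare collision-rejections of clauses, a $\mathrm{Bin}(km, p_i)$ random variable with mean $\Theta(w_i)$, since each of the $km$ clause slots samples $x_i$ independently with probability $p_i = \Theta(w_i/n)$. The proof will proceed in four steps: (i) reduce to an i.i.d.\ slot model, (ii) compute $\mathbb{E}[N_{\ge d}]$, (iii) establish concentration, and (iv) union-bound over $d$.

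For step (i), \lemref{birthday} shows that each clause experiences a collision with probability only $\tfrac12 k^2 \|\vec p\|_2^2 = o(1)$, so in expectation only $o(m)$ clauses are resampled. A coupling with the idealized $km$-slot i.i.d.\ model therefore changes $N_{\ge d}$ by an additive $o(n d^{1-\beta})$ for all $d$ not too close to $d_{\max}$; at the extreme top one argues directly in the rejection model. For step (ii), write $\mathbb{E}[N_{\ge d}] = \sum_i \Pr[D_i \ge d]$ and split the sum by weight. Chernoff gives $\Pr[D_i \ge d] = 1 - o(1)$ for $w_i \ge 2d$ (with $w_i$ large enough), and $\Pr[D_i \ge d] \le e^{-\Theta(d \log(d/w_i))}$ for $w_i \le d/2$. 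Plugging these estimates into \thmref{karl} against the power-law density $\Theta(w^{-\beta})$ shows that the high-weight block contributes $\Theta(n d^{1-\beta})$ to the expectation while the low-weight tail sums to $O(n d^{1-\beta})$, so $\mathbb{E}[N_{\ge d}] = \Theta(n d^{1-\beta})$.

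For step (iii), $N_{\ge d}$ is a function of the $m$ independent clauses with bounded differences at most $2k$. McDiarmid's inequality therefore gives
\[
\Pr\!\left[\,|N_{\ge d} - \mathbb{E}[N_{\ge d}]| \ge \tfrac12 \mathbb{E}[N_{\ge d}]\,\right] \le 2\exp\!\left(-\Theta(n d^{2(1-\beta)})\right),
\]
which is $n^{-\omega(1)}$ as long as $d = o((n/\log n)^{1/(2\beta-2)})$. In the complementary (polynomially large) range of $d$, concentration follows instead from Chernoff applied to each $D_i$ directly: all variables with $w_i \ge (1+\eps)d$ satisfy $D_i \ge d$ w.h.p.\ and all with $w_i \le (1-\eps)d$ satisfy $D_i < d$ w.h.p., pinning $N_{\ge d}$ to the power-law count of weights in the band $[(1-\eps)d, \infty)$, which is $\Theta(n d^{1-\beta})$ by \defref{general}. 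The two regimes overlap cleanly because $(n/\log n)^{1/(2\beta-2)} \gg \log n$. Finally, step (iv) is a union bound over the at most $d_{\max} = \mathrm{poly}(n)$ integer values of $d \in [d_{\min}, d_{\max}]$.

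The main obstacle will be step (ii) in the intermediate band $w_i = \Theta(d)$, where $\Pr[D_i \ge d]$ is bounded away from both $0$ and $1$ and the crude Chernoff estimates above do not directly apply. Here I expect to need a sharper Poisson-style tail $e^{-\lambda_i}\lambda_i^d/d!$ integrated against the power-law density via \thmref{karl}, together with a careful choice of the implicit constants in $d_{\min}$ and $d_{\max}$, so as to confirm that this intermediate band contributes only $O(n d^{1-\beta})$ and does not overwhelm the asymptotic.
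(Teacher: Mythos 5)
Your proposal is correct in substance and follows the same skeleton as the paper's proof: establish $\Ex{f_x}=\Theta(w_x)$, use per-variable Chernoff bounds to pin $N_{\ge d}$ between the power-law counts of variables with expected frequency above $2d$ and above $d/\mathrm{const}$ when $d=\Omega(\log n)$, and prove concentration of the count itself when $d$ is small. The one genuinely different ingredient is your concentration tool for the small-$d$ regime: you apply McDiarmid's bounded-differences inequality to the clause exposure (difference at most $2k$ per clause, giving $\exp(-\Theta(nd^{2(1-\beta)}))$), whereas the paper argues that the indicators $Y_i=\mathds{1}[f_{x_i}\ge d]$ are negatively associated and invokes a Chernoff bound for negatively associated variables. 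Both work and the two regimes overlap as you say; your route avoids having to justify negative association, at the cost of a slightly weaker (but still superpolynomial) tail. Your step (i) coupling to a $km$-slot i.i.d.\ model is an unnecessary detour: since clauses are independent and each contains $x$ at most once, $f_x$ is already a sum of $m$ independent indicators in the rejection model, with $m p_x \le \Ex{f_x} \le (1+o(1))\,k m p_x$, so Chernoff applies directly and the additive-error bookkeeping near $d_{\max}$ disappears. Finally, the ``main obstacle'' you flag --- the intermediate band $w_i=\Theta(d)$ --- is a non-issue and needs no Poisson-style refinement: by \defref{general} the number of variables with $\Ex{f_x} > d/(2e)$ is already $\Oh(n d^{1-\beta})$, so for the upper bound on $\Ex{N_{\ge d}}$ one simply bounds their contribution by $1$ each (this is exactly what the paper does), while the lower bound only uses variables with $\Ex{f_x}\ge 2d$. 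One small quantitative caveat: with the cutoff $w_i\le d/2$ your claimed tail $e^{-\Theta(d\log(d/w_i))}$ can degenerate; take the low-weight cutoff at $\Ex{f_x}\le d/(2e)$ (as the paper does) so that the upper tail is cleanly at most $2^{-d}=\Oh(d^{1-\beta})$.
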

\begin{proof}
Let $f_x$ be the number of appearances of $x$ or $\bar{x}$ in $\Phi$. Observe that $\Ex{f_x} \leq k\cdot m \cdot p_x$, since each variable can appear at most once in a clause. On the other hand, it holds $\Ex{f_x} \geq m \cdot p_x$, since this is the expected number of appearances of $x$ in a $1$-SAT formula. Thus, since $m = \Theta(n)$ and $k=\Theta(1)$ by assumption, it holds that
\begin{equation}
\Ex{f_x}=\Theta(w_x).\label{eq:exp-freq}
\end{equation}

We first prove the statement for $d\ge 2c\ln n$, where $c>0$ is some suitable constant.
By applying Chernoff bounds, we can derive that \whp all variables $x$ with $\Ex{f_x}< \tfrac d 2$ appear fewer than $d$ times; and all variables $x$ with $\Ex{f_x}\ge 2d$ appear at least $d$ times.
The requirement $d_{\min}\le d\le d_{\max}$ is needed so that the Chernoff bounds work, which might not be the case if $d$ is too close to $w_{\min}$ or $w_{\max}$.
Due to \defref{general} and \eq{exp-freq} this implies
\[N_{\ge d} = \Theta\left(n\cdot d^{1-\beta}\right).\]
Now let us consider the case $d< 2c\ln n$ and let $Y_i$ be random variables indicating if $f_{x_i}\ge d$ for $i=1,2,\ldots,n$.
To show a lower bound on $N_{\ge d}$, we again look at variables $x$ with $\Ex{f_x}\ge2d$. 
For those it holds that
\begin{equation}
\Pr\left(Y_i=0\right)\le\Pr\left(f_{x_i} < \frac12 \Ex{f_{x_i}}\right)\le e^{-\frac{\Ex{f_{x_i}}}8} \le e^{-\frac d 4},
\end{equation}
again due to Chernoff Bounds.
Also, by \eq{exp-freq}, it holds for variables $x$ with $\Ex{f_x}\ge2d$ that $w_x=\Omega(2d)$.
By the requirements on the weight distribution from \defref{general} there are $\Theta(n\cdot d^{1-\beta})$ such variables.
Therefore, it holds that
\[\Ex{N_{\ge d}}\ge\Ex{\sum_{\substack{i\in[n]\colon\\ \Ex{f_{x_i}}\ge2d}}Y_i}\ge c'\left(1-e^{-\frac d 4}\right)\cdot n \cdot d^{1-\beta}\]
for a suitable constant $c'>0$.
Observe that if we condition on $Y_i = 1$, \ie that $x_i$ appears at least $d$ times, this lowers the probability of all other variables to appear $d$ times, and vice versa. Thus, the random variables $Y_1,\ldots,Y_n$ are negatively correlated and we may apply a Chernoff bound~\cite[Theorem 1.16]{auger2011theory}.
Since $1 - e^{-d/4} = \Omega(1)$ and $d = \Oh(\log n)$, we obtain that \whp\
\[N_{\ge d} \ge \frac12 c'\cdot \left(1-e^{-\frac d 4}\right)\cdot n \cdot d^{1-\beta} = \Omega(n\cdot d^{1-\beta}).\]

To show an upper bound on $N_{\ge d}$ we consider variables $x$ with $\Ex{f_x}\le \tfrac d {2e}$ of which there are $n\cdot (1 -\Theta(d^{1-\beta}))$ due to \defref{general}.
For these variables, by a Chernoff bound~\cite{DP09} it holds that
\[\Pr\left(Y_i=1\right)\le\Pr\left(f_{x_i} > 2e\cdot \Ex{f_{x_i}}\right)\le 2^{-d}.\]
Now let $N'_{\ge d}$ be the number of variables with $\Ex{f_x}\le \frac d{2e}$ and $f_x\ge d$.
Thus, there exists a constant $c'' > 0$ such that
\[\Ex{N'_{\ge d}}\le n\cdot 2^{-d}\left(1- c'' \cdot d^{1-\beta} \right).\]
Due to negative association of the $Y_i$'s we can again use a Chernoff bound, yielding that \whp,
\begin{equation}
N'_{\ge d} \le n\cdot  2^{1-d}\left(1- c'' \cdot d^{1-\beta}\right).\label{eq:N'-chernoff}
\end{equation}
If $\Ex{N'_{\ge d}}$ is very small, for example $\Ex{N'_{\ge d}}=\Oh\left(\log n\right)$, then we can use negative association to apply the Chernoff bound with $t>2e\cdot\Ex{N'_{\ge d}}$ to achieve \eq{N'-chernoff} with high probability, since $t=n\cdot d^{1-\beta}=\Omega\left(\frac{n}{\operatorname{polylog}(n)}\right)$.
Observe that $1 - c'' d^{1-\beta} = \Oh(1)$. Furthermore, for variables $x$ with $\Ex{f_x}> \frac d{2e}$, we pessimistically assume $f_x\ge d$.
This gives us
\[N_{\ge d} = \Oh\left(n\cdot (d^{1-\beta}+2^{-d})\right) = \Oh\left(n\cdot d^{1-\beta}\right),\]
since $2^{-d}=\Oh\left(d^{1-\beta}\right)$ for constant $\beta$.
\end{proof}

\section{Small Power Law Exponents are Unsatisfiable}
\label{sec:unsat}
For small power law exponents, one can show that they result in formulas that are unsatisfiable (for large $n$) for all constant clause-variable ratios. The rationale behind this is that large variables with weight $\Theta(w_n)$ appear polynomially often together in a clause. For constant $k$, they thus appear in all $2^k$ configurations (negated and non-negated), making the formula trivially unsatisfiable. Theorem~\ref{thm:unsat}, already stated in the introduction, gives a sufficient condition on the variable distribution to make a random $k$-SAT formula unsatisfiable.
\statethmunsat
\begin{proof}
Recall that $p_i$ is without loss of generality increasing in $i$. Consider the $k$ largest variables $n-k+1, \ldots, n$.
We call $\mathcal{E}_i$ the event that clause $i$ consists of these variables. Then,
\begin{align*}
\Pr[\mathcal{E}_i] = \Omega(p_{n-k+1}^k) = \Omega(\tfrac{\log n}{n}).
\end{align*}
Since each clause is drawn independently at random, we obtain by a Chernoff bound (see for example Theorem~1.1 in \cite{DP09}) that with high probability, the total number of clauses consisting of these variables is \vspace{-2mm}
\[\vspace{-2mm} |\mathcal{E}| := \sum_{i=1}^m  \mathds{1}[\mathcal{E}_i] = \Omega(\log n).\]
In other words, the number of clauses in which the $k$ largest variables appear together increases as a logarithm in $n$. Since in each of these clauses, the literals appear negated or non-negated with constant probability $\nicefrac12$, we have that all $2^k$ possible combinations of negated and non-negated literals appear in the formula with probability at least
\[1 - 2^k \cdot (\tfrac{2^k-1}{2^k})^{|\mathcal E|} = 1 - n^{-\Omega(1)}\]
by the union bound.
Since all $2^k$ combinations cannot be satisfied at once, the resulting formula is unsatisfiable.
\end{proof}
By applying \thmref{unsat} to a power law distribution on the variables, we obtain the following power law threshold for unsatisfiability.
\begin{corollary}
\label{cor:unsat}
Let $\Phi$ be a random $k$-SAT formula that follows an arbitrary power law distribution fulfilling \defref{general}. If the power law exponent is $\beta \leq \tfrac{2k-1}{k-1} - \eps$
for an arbitrary $\eps>0$, $\Phi$ is unsatisfiable with high probability.
\end{corollary}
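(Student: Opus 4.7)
The plan is to reduce \corref{unsat} to \thmref{unsat} by verifying the single quantitative hypothesis $p_{n-k+1} = \Omega((\log n / n)^{1/k})$. Since $k$ is constant and the target $(\log n / n)^{1/k} = n^{-1/k + o(1)}$ is essentially a pure polynomial, it is enough to establish a bound of the form $p_{n-k+1} = \Omega(n^{-1/k + \delta})$ for some constant $\delta = \delta(\varepsilon,k) > 0$; the gap is then genuinely polynomial and easily swallows the $\log n$ factor.

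First, I would pin down the magnitude of the top $k$ weights. The lower bound $F(w) \geq \alpha_1 w^{1-\beta}$ from \defref{general} (cf.\ \eq{sandwhich}) translates directly into ``at least $\alpha_1 n w^{1-\beta}$ variables have weight $\geq w$''. Choosing $w = (\alpha_1 n/k)^{1/(\beta-1)} = \Theta(n^{1/(\beta-1)})$ makes this count at least $k$, so $w_{n-k+1} \geq \Theta(n^{1/(\beta-1)})$. The matching upper bound $w_{n-k+1} \leq w_n = \Theta(n^{1/(\beta-1)})$ is immediate from \defref{general}, and the two together give $w_{n-k+1} = \Theta(n^{1/(\beta-1)})$; informally, the $k$ heaviest weights are all of the same order as $w_n$ because $k$ is constant.

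Second, I would convert this into a bound on $p_{n-k+1}$. Under the power-law assumption $\beta > 2$, \corref{sumweights} applied at $w = w_1 = \Theta(1)$ yields $\sum_j w_j = \Theta(n)$, so
\[p_{n-k+1} = \frac{w_{n-k+1}}{\sum_j w_j} = \Theta\bigl(n^{-(\beta-2)/(\beta-1)}\bigr).\]
The threshold location then drops out of an elementary identity,
\[\frac{1}{k} - \frac{\beta-2}{\beta-1} = \frac{(2k-1) - \beta(k-1)}{k(\beta-1)},\]
whose right-hand side is strictly positive precisely when $\beta < (2k-1)/(k-1)$. Under the hypothesis $\beta \leq (2k-1)/(k-1) - \varepsilon$, this expression is at least a positive constant depending only on $\varepsilon$ and $k$, so $p_{n-k+1} = \Omega(n^{-1/k + \delta})$ with $\delta > 0$. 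This dominates $(\log n/n)^{1/k}$, \thmref{unsat} applies, and $\Phi$ is unsatisfiable \whp

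The main obstacle is purely algebraic: locating the critical exponent $\beta = (2k-1)/(k-1)$ at which $p_{n-k+1}^k$ meets the $\Theta(\log n/n)$ threshold, and making sure that the two-sided bound $F(w) = \Theta(w^{1-\beta})$ on all of $[w_1, w_n]$ is strong enough to force $w_{n-k+1}$ into the predicted window rather than landing in a ``gap''. Beyond that, no new probabilistic machinery is required, since all concentration is already packaged inside \thmref{unsat}.
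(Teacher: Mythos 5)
Your reduction to \thmref{unsat} is exactly the paper's strategy, and for $\beta>2$ your computation is correct and essentially identical to the paper's: the $k$ heaviest variables all have weight $\Theta(n^{1/(\beta-1)})$, hence $p_{n-k+1}=\Theta(n^{-(\beta-2)/(\beta-1)})$, and the exponent comparison $\tfrac1k-\tfrac{\beta-2}{\beta-1}>0 \iff \beta<\tfrac{2k-1}{k-1}$ is the same algebra the paper packages as $k=\tfrac{\beta-1}{\beta-2}-\eps'$.

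There is, however, one missing case. The corollary's hypothesis $\beta\le\tfrac{2k-1}{k-1}-\eps$ does not force $\beta>2$ (for $k=3$ it allows any $\beta\le 2.5-\eps$, including $\beta\le 2$), and your key step --- invoking \corref{sumweights} to get $\sum_j w_j=\Theta(n)$ and hence $p_{n-k+1}=w_{n-k+1}/\Theta(n)$ --- is only valid for $\beta>2$. For $\beta\le 2$ the total weight is superlinear, $\sum_j w_j=\Theta(n^{\frac{1}{\beta-1}})$ (or $\Theta(n\log n)$ at $\beta=2$), so your formula for $p_{n-k+1}$ and the ensuing exponent identity no longer apply. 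The regime is in fact easier: since $w_{n-k+1}=\Theta(n^{\frac{1}{\beta-1}})$ is of the same order as the total weight, one gets $p_{n-k+1}=\Omega(1)$ (or $\Omega(1/\log n)$ at $\beta=2$), which trivially satisfies the hypothesis of \thmref{unsat}. The paper dispatches this in one sentence using \thmref{karl}; you should add the analogous sentence to cover the full range of $\beta$ claimed by the corollary.
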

\begin{proof}
Observe that from $\beta = \tfrac{2k-1}{k-1} - \eps $ it follows $k = \tfrac{\beta-1}{\beta-2} - \eps'$ for some constant $\eps'$.
By setting $nF(w) \leq k$ we obtain that the largest $k$ variables all have weight 
$\Theta(w_n) = \Theta(n^{\frac{1}{\beta-1}}).$ 
Consequently, when $\beta > 2$,
\[(p_{n-k})^k = \Theta(n^{-k\frac{\beta-2}{\beta-1}}) = \Theta(n^{-1+\varepsilon'\frac{\beta-2}{\beta-1}}) = \omega(\tfrac{\log n}{n}),\]
and the statement follows from \thmref{unsat}. For the case where $\beta \leq 2$, one can show using \thmref{karl} that $\sum_i w_i = \Theta(n^{\frac{1}{\beta-1}})$, and therefore $p_{n-k} = \Omega(1)$. Again, the statement follows from \thmref{unsat}.
\end{proof}

\section{Large Clause-Variable Ratios are Unsatisfiable} 
\label{sec:flip}
It is a well-known result that random SAT on any probability distribution will result in unsatisfiable formulas if the clause-variable ratio is high. This follows from the probabilistic method: The expected number of assignments that satisfy a formula is $2^n (1-2^{-k})^m$. This is independent from the variable distribution as long as each variable is negated with probability $\nicefrac 12$. Hence, if the clause-variable ratio exceeds $\ln(2)/\ln(\tfrac{2^k}{2^k-1})$, the resulting formula will be unsatisfiable with high probability. This constant is rather large, however: In the case of $k=3$ this yields an upper bound on the clause-variable ratio of $\approx 5.191$. For the concrete power law distribution in \defref{concrete}, the true threshold is much smaller. In fact, it appears to be below the satisfiability threshold for uniform random SAT.

Let us restate the main result, which will be proven with the Single Flip Method \cite{KKKS98}.
\statethmflip
The following is a corollary from this theorem:
\begin{corollary}\label{cor:sfuniform}
Let $\Phi$ be a random $k$-SAT formula that follows \defref{randomSAT} with $k\geq2$, $r=\tfrac mn$ and $\|\vec p \|^2_2=o(1)$.
With high probability, $\Phi$ is unsatisfiable if
\[\left(1-\tfrac{1}{2^k}\right)^r\left(2-\exp\left(-\left(\frac{k}{2^k-1}r\right)\left(1+o(1)\right)\right)\right)<1.\]
\end{corollary}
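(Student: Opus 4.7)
The plan is to derive the simpler condition of the corollary from the more elaborate condition of \thmref{singleflip-general} via two standard inequalities---AM--GM and Jensen---and then invoke the theorem by transitivity: if the corollary's hypothesis holds then so does the theorem's, and the theorem delivers unsatisfiability \whp.

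Write $c := \tfrac{k}{(2^k-1)(1-\tfrac12 k^2\|\vec p\|_2^2)}$ and $a_i := (1 - c\, p_i)^m$, so that the bracketed product in \thmref{singleflip-general} is $\prod_{i=1}^n(2-a_i)$. Since each factor $2-a_i\in[1,2]$, AM--GM yields
\[
\left(\prod_{i=1}^n (2-a_i)\right)^{1/n} \leq 2 - \frac{1}{n}\sum_{i=1}^n a_i.
\]
To lower bound the arithmetic mean, I would apply Jensen's inequality to the map $p\mapsto(1-cp)^m$, which has second derivative $m(m-1)c^2(1-cp)^{m-2}\geq 0$ whenever $cp\leq 1$. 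The assumption $\|\vec p\|_2^2 = o(1)$ gives $c = \tfrac{k}{2^k-1}(1+o(1)) < 1$ for all $k\geq 2$ and sufficiently large $n$, and $p_i\le 1$, so the map is convex over the relevant range. Using $\frac{1}{n}\sum_i p_i = \frac{1}{n}$ then yields $\frac{1}{n}\sum_i a_i \geq (1-c/n)^m$.

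To finish, I would collect the asymptotics. A Taylor expansion $\ln(1-c/n) = -(c/n)(1+O(1/n))$ together with $m=rn$ gives $(1-c/n)^m = \exp(-cr(1+O(1/n)))$; absorbing this $O(1/n)$ correction together with the $o(1)$ in $c$ into a single $(1+o(1))$ factor yields
\[
\frac{1}{n}\sum_{i=1}^n a_i \;\geq\; \exp\!\left(-\tfrac{kr}{2^k-1}(1+o(1))\right).
\]
Combined with monotonicity of $x\mapsto 2-x$ and the AM--GM bound, the bracketed quantity in \thmref{singleflip-general} is bounded above by $2 - \exp(-\tfrac{kr}{2^k-1}(1+o(1)))$. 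Whenever the corollary's hypothesis holds, the LHS of \thmref{singleflip-general} is therefore strictly less than~$1$, and the theorem provides the desired conclusion.

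The main obstacle is not conceptual but purely a matter of bookkeeping: tracking the direction of each inequality so that an upper bound on the theorem's LHS is maintained throughout the chain, verifying that $c < 1$ so that convexity of $p\mapsto(1-cp)^m$ holds across the full range $p\in[0,1]$, and folding the several $1+o(1)$ contributions (from $c$ and from the Taylor expansion of $\ln(1-c/n)$) into the single $(1+o(1))$ that appears in the corollary's statement.
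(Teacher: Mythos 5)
Your proposal is correct and follows essentially the same route as the paper's proof: first bound the geometric mean in \thmref{singleflip-general} by the arithmetic mean via AM--GM, then lower-bound $\tfrac1n\sum_{i}(1-c\,p_i)^m$ by $\exp\left(-\tfrac{k}{2^k-1}r(1+o(1))\right)$ using $\sum_i p_i=1$ and $\|\vec p\|_2^2=o(1)$. The only (harmless) difference is in that second step, where the paper uses the pointwise bound $1-x\ge e^{-x/(1-x)}$ followed by a second application of AM--GM to the resulting exponentials, whereas you apply Jensen's inequality directly to the convex map $p\mapsto(1-cp)^m$ and then Taylor-expand $(1-c/n)^m$; both yield the same bound with the same bookkeeping of the $(1+o(1))$ factors.
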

\begin{proof}
We can upper-bound the left-hand side of the inequality as follows
\begin{align*}
& \left(1-\tfrac{1}{2^k}\right)^r\left[\prod_{i=1}^{n}{\left[2-\left(1-\frac{k\cdot p_i}{2^k-1}\frac{1}{\left(1-\tfrac12 k^2 \|\vec p \|^2_2\right)}\right)^m\right]}\right]^{\frac1n}\ \\
& \le \left(1-\tfrac{1}{2^k}\right)^r\left[\frac1n\sum_{i=1}^{n}{\left[2-\left(1-\frac{k\cdot p_i}{2^k-1}\frac{1}{\left(1-\tfrac12 k^2 \|\vec p \|^2_2\right)}\right)^m\right]}\right]\\
& =  \left(1-\tfrac{1}{2^k}\right)^r\left[2-\frac1n\sum_{i=1}^{n}{\left(1-\frac{k\cdot p_i}{2^k-1}\frac{1}{\left(1-\tfrac12 k^2 \|\vec p \|^2_2\right)}\right)^m}\right]
\end{align*}
by applying the inequality of arithmetic and geometric means.
Since $\frac{k\cdot p_i}{2^k-1}\frac{1}{\left(1-\tfrac12 k^2 \|\vec p \|^2_2\right)}$ is upper-bounding a probability, we can assume it to be at most $1$.
It now holds that
\begin{align*}
\left(1-\frac{k\cdot p_i}{2^k-1}\frac{1}{\left(1-\tfrac12 k^2 \|\vec p \|^2_2\right)}\right)^m
	&	> \exp\left(-\frac{m}{\frac{\left(2^k-1\right)\left(1-\tfrac12 k^2 \|\vec p \|^2_2\right)}{k\cdot p_i}-1}\right)\\
	& = \exp\left({-\frac{k\cdot p_i}{2^k-1}\frac{m}{\left(1-\tfrac12 k^2 \|\vec p \|^2_2\right)}(1+o(1))}\right),
\end{align*}
since $\|\vec p \|^2_2=o(1)$ implies $\max_i\left(p_i\right)=o(1)$.
By plugging this into the inequality from before and applying the inequality of arithmetic and geometric means again, we get
\begin{align*}
& \left(1-\tfrac{1}{2^k}\right)^r\left[\prod_{i=1}^{n}{\left[2-\left(1-\frac{k\cdot p_i}{2^k-1}\frac{1}{\left(1-\tfrac12 k^2 \|\vec p \|^2_2\right)}\right)^m\right]}\right]^{\frac1n}\ \\
& \le \left(1-\tfrac{1}{2^k}\right)^r\left[2-\frac1n\sum_{i=1}^{n}{\exp\left(-\left(\frac{k\cdot p_i}{2^k-1}\frac{m}{\left(1-\tfrac12 k^2 \|\vec p \|^2_2\right)}\right)\left(1+o(1)\right)\right)}\right]\\
& \le \left(1-\tfrac{1}{2^k}\right)^r\left[2-\left(\prod_{i=1}^{n}{\exp\left(-\left(\frac{k\cdot p_i}{2^k-1}\frac{m}{\left(1-\tfrac12 k^2 \|\vec p \|^2_2\right)}\right)\left(1+o(1)\right)\right)}\right)^{\frac1n}\right]\\
& = \left(1-\tfrac{1}{2^k}\right)^r\left[2-\exp\left(-\left(\frac{k}{2^k-1}\frac{r}{\left(1-\tfrac12 k^2 \|\vec p \|^2_2\right)}\right)\left(1+o(1)\right)\right)\right].
\end{align*}
For $\|\vec p \|^2_2=o(1)$ this is roughly
\[\left(1-\tfrac{1}{2^k}\right)^r\left(2-\exp\left(-\left(\frac{k}{2^k-1}r\right)\left(1+o(1)\right)\right)\right).\qedhere \]
\end{proof}

Interestingly, the above corollary gives the same inequality as the Single-Flip Method for uniform random SAT~\cite{KKKS98}.
This shows that the uniform distribution resembles a worst-case for this method; and all other distributions can only improve this bound.

If $\vec p$ follows a power law distribution as in \defref{concrete}, we can derive the following theorem, which gives an upper bound independent of $n$.
\begin{theorem}\label{thm:singleflip-powerlaw}
Let $\Phi$ be a random $k$-SAT formula with $k\geq2$ and $r=\tfrac mn$ that follows a power law distribution fulfilling \defref{concrete}. Let further $N\in\mathbb{N}^+$ be any constant. If the power law exponent is $\beta>2$, then $\Phi$ is \whp\ unsatisfiable if 
\[\left(\left(1-\tfrac{1}{2^k}\right)^r 2^{\frac 1N}\prod_{l=1}^{N-1}\left[2-\exp{\left(-\left(1+o{(1)}\right)r\frac{k}{2^k-1}\frac{\beta-2}{\beta-1}\left(\frac{N}{l}\right)^{\frac{1}{\beta-1}}\right)}\right]^{\frac 1N}\right)<1.\]
\end{theorem}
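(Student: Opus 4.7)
The plan is to specialize Theorem~\ref{thm:singleflip-general} to the concrete power-law weights and control the unwieldy $n$-fold product by partitioning the variables into $N$ constant-size blocks. Since $\beta > 2$, the concrete sequence satisfies $\|\vec p\|_2^2 = \Theta(n^{(4-2\beta)/(\beta-1)}) = o(1)$, as already verified in Section~\ref{sec:prelim}, so the hypothesis of Theorem~\ref{thm:singleflip-general} applies and the correction factor $(1-\tfrac12 k^2\|\vec p\|_2^2)^{-1}$ merges into a $(1+o(1))$ term.

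Order the indices so that $w_1 \leq \cdots \leq w_n$ and split $\{1,\dots,n\}$ into $N$ consecutive blocks of size $n/N$, writing $B_l$ for the block of ranks $n-ln/N+1,\dots,n-(l-1)n/N$, so that $B_1$ holds the top $n/N$ weights and $B_N$ the bottom. Because $x \mapsto 2-(1-x)^m$ is increasing in $x$, replacing every factor with index $i \in B_l$ by the factor at $\max_{i \in B_l} p_i$ gives an upper bound that collapses the $n$-fold product into $N$ distinct brackets, each raised to the power $|B_l|/n = 1/N$. The block $B_1$ contains weight $w_n = \Theta(n^{1/(\beta-1)})$ for which the exponent $m\cdot k p_n/(2^k-1)$ diverges; the factor therefore tends to $2$ anyway, and we simply invoke the trivial bound $2$, yielding the advertised $2^{1/N}$ contribution. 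For $l=1,\dots,N-1$, the maximum of $B_{l+1}$ is $w_{n-ln/N}$, which by Definition~\ref{def:concrete} equals $(N/l)^{1/(\beta-1)}(1+o(1))$. Converting to a probability via $\sum_j w_j = \tfrac{\beta-1}{\beta-2}n(1+o(1))$ (a direct application of Theorem~\ref{thm:karl} to the concrete sequence, consistent with Corollary~\ref{cor:sumweights}) gives $\max_{i\in B_{l+1}} p_i \leq \tfrac{\beta-2}{\beta-1}(N/l)^{1/(\beta-1)}\tfrac{1}{n}(1+o(1))$. Since this quantity is $o(1)$, the estimate $(1-x)^m \geq \exp(-mx(1+o(1)))$ yields
\[ 2-\Bigl(1-\tfrac{k p}{2^k-1}(1+o(1))\Bigr)^m \;\leq\; 2 - \exp\!\Bigl(-(1+o(1))\,r\,\tfrac{k}{2^k-1}\,\tfrac{\beta-2}{\beta-1}\,(N/l)^{1/(\beta-1)}\Bigr), \]
and taking $1/N$ powers and multiplying over $l=1,\dots,N-1$ reproduces the claimed inequality; Theorem~\ref{thm:singleflip-general} then delivers unsatisfiability.

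The main obstacle is not conceptual but rather careful bookkeeping of $(1+o(1))$ corrections: four independent sources (the weight approximation $w_{n-ln/N}\approx(N/l)^{1/(\beta-1)}$, the evaluation of $\sum_j w_j$, the passage from $(1-x)^m$ to $e^{-mx}$, and the birthday-paradox correction already present in Theorem~\ref{thm:singleflip-general}) must all be absorbed into the single $(1+o(1))$ inside the exponential. A minor nuisance is that $n/N$ need not be integral and that the largest-weight representative of each block is only accurate up to $O(1)$ rank shifts at the boundaries; since $N$ is a fixed constant, however, these effects concern only $O(1)=o(n)$ variables per block and vanish in the limit.
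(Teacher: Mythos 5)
Your proof is correct and follows essentially the same route as the paper: specialize Theorem~\ref{thm:singleflip-general} to the concrete weights, partition the variables into $N$ equal blocks, bound each block's probabilities by the block maximum $\tfrac{\beta-2}{\beta-1}(N/l)^{1/(\beta-1)}\tfrac{1+o(1)}{n}$, handle the heaviest block with the trivial bound $2$, and absorb all corrections into the $(1+o(1))$ inside the exponential. (One nitpick: your stated rate $\|\vec p\|_2^2=\Theta(n^{(4-2\beta)/(\beta-1)})$ is only valid for $2<\beta<3$, but the conclusion you actually need, $\|\vec p\|_2^2=o(1)$, holds for all $\beta>2$.)
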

\begin{proof}
We apply \thmref{singleflip-general}. If $\vec p$ follows a power law distribution as in \defref{concrete}, we can further simplify $\Ex{N_{SF}}$ to
\begin{align*}
\Ex{N_{SF}} & \leq  \left(1-\tfrac{1}{2^k}\right)^m\prod_{i=1}^{n}{\left[2-\exp{\left(-\frac{\frac{k\cdot p_i}{2^k-1}m}{\left(1-\tfrac12 k^2 \|\vec p \|^2_2\right)-\frac{k\cdot p_i}{2^k-1}}\right)}\right]}
\end{align*}
using the inequality $1-x\geq e^{-\frac{x}{1-x}}$ which holds for all $x<1$. We upper bound the probabilities $p_i$ by choosing an integer $N\ge2$ and dividing the set of variables into $N$ buckets of equal size. For $i\in \left[\left\lceil\frac{l-1}{N} n\right\rceil  + 1, \left\lceil\frac{l}{N}n\right\rceil \right]$ and $1\le l\le N-1$ we can estimate
\[p_i \le  \frac{\left(\frac{n}{n-\left\lceil\frac{l}{N}n\right\rceil+1}\right)^{\frac{1}{\beta-1}}}{\sum_{i=1}^{n}{\left(\frac{n}{i}\right)^{\frac{1}{\beta-1}}}} = \frac{\left(\frac{N}{N-l}\right)^{\frac{1}{\beta-1}}}{\sum_{i=1}^{n}{\left(\frac{n}{i}\right)^{\frac{1}{\beta-1}}}}.\]
The last bucket $i\in \left[\left\lceil\frac{N-1}{N} n\right\rceil  + 1, n\right]$ is simply bounded by $2^{\frac nN}$ in the overall product.
W.l.o.g. we assume that there are exactly $\frac nN$ variables in each bucket, as  we could split the factor for $\left\lceil\frac{l}{N}n\right\rceil$ into appropriate parts which both obey the upper bound on $p_i$ for bucket $l$ and $l+1$ respectively. We can now upper-bound $\Ex{N_{SF}}$ by
\[\left(1-\tfrac{1}{2^k}\right)^m 2^{\frac nN}\prod_{l=1}^{N-1}\left[2-\exp{\left(-\frac{\frac{k}{2^k-1}\cdot m\cdot \left(\frac{N}{l}\right)^{\frac{1}{\beta-1}}}{\left(\sum_{i=1}^{n}{\left(\frac{n}{i}\right)^{\frac{1}{\beta-1}}}\right)\left(1-\tfrac12 k^2 \|\vec p \|^2_2\right)-\frac{k}{2^k-1} \left(\frac{N}{l}\right)^{\frac{1}{\beta-1}}}\right)}\right]^{\frac nN}
\]
We are now interested in what happens to the expression in the exponent when $n$ tends to infinity.
First, $\sum_{i=1}^{n}{\left(\tfrac{n}{i}\right)^{\frac{1}{\beta-1}}}\rightarrow \tfrac{\beta-1}{\beta-2}n$ for $\beta>2$.
Second, by \thmref{karl} we have that $\|\vec p\|^j_j = \Theta\left(n^{-j\frac{\beta-2}{\beta-1}}\right) \rightarrow 0$ whenever $j > \beta-1$ and $\|\vec p\|^j_j = \Theta\left(n^{-j+1}\right) \rightarrow 0$ whenever $j\le \beta-1$.
Finally, for every constant $N$ we have that $\frac{k}{2^k-1}\left(\tfrac{N}{l}\right)^{\frac{1}{\beta-1}}$ is also constant.
Using $m=r\cdot n$ we can thus simplify
\[\frac{\frac{k}{2^k-1}\cdot m\cdot \left(\frac{N}{l}\right)^{\frac{1}{\beta-1}}}{\left(\sum_{i=1}^{n}{\left(\frac{n}{i}\right)^{\frac{1}{\beta-1}}}\right)\left(1-\tfrac12 k^2 \|\vec p \|^2_2\right)-\frac{k}{2^k-1} \left(\frac{N}{l}\right)^{\frac{1}{\beta-1}}}
=
\left(1+o{(1)}\right)r\frac{k}{2^k-1}\frac{\beta-2}{\beta-1}\left(\frac{N}{l}\right)^{\frac{1}{\beta-1}}.\]
Plugging this into our inequality we get
\begin{align*}
\Ex{N_{SF}} & \leq  \left(1-\tfrac{1}{2^k}\right)^m 2^{\frac nN}\prod_{l=1}^{N-1}\left[2-\exp{\left(-\left(1+o{(1)}\right)r\frac{k}{2^k-1}\frac{\beta-2}{\beta-1}\left(\frac{N}{l}\right)^{\frac{1}{\beta-1}}\right)}\right]^{\frac nN}\\
& = \left(\left(1-\tfrac{1}{2^k}\right)^r 2^{\frac 1N}\prod_{l=1}^{N-1}\left[2-\exp{\left(-\left(1+o{(1)}\right)r\frac{k}{2^k-1}\frac{\beta-2}{\beta-1}\left(\frac{N}{l}\right)^{\frac{1}{\beta-1}}\right)}\right]^{\frac 1N}\right)^n
\end{align*}
This establishes \thmref{singleflip-powerlaw}.
\end{proof}

The bound from this Theorem improves as $N\to\infty$.
As this expression is rather terse, we also numerically determine in \tabref{flip} the smallest constant $r$ such that the formula is unsatisfiable. 
We compare these values to the upper bounds for uniform random SAT obtained from the Single-Flip Method.
\begin{table}[t]
\begin{center}
\begin{tabular}{@{\ }rrrrrrrrrc@{\ \ }r@{\hspace*{.3cm}}}
\toprule
& \multicolumn{8}{ c }{\bf power law distribution with exponent~$\bm\beta$}
& & \multirow{2}{*}{\begin{minipage}{2cm}\begin{center}\bf uniform\\dist.\end{center}\end{minipage}\!\!\!\!\!\!}
\\ \cmidrule{2-9}
$\bm k$							& \textbf{2.2} 	& \textbf{2.3}		& \textbf{2.4} 		& \textbf{2.5} 		& \textbf{2.6} 		& \textbf{2.7} 		& \textbf{2.8} 		& \textbf{2.9} 		&& \\\midrule
\textbf{3} 		&   						&  								&  								& 3.48 						& 3.71 						& 3.87 						& 3.99 						& 4.08 						&& 4.67\\
\textbf{4} 		&   						&  								& 7.87 						& 8.42 						& 8.78 						& 9.04 						& 9.23 						& 9.37 						&& 10.23\\
\textbf{5} 		&   						& 16.27 					& 17.75 					& 18.64 					& 19.21 					& 19.61 					& 19.90 					& 20.11 					&& 21.33\\
\textbf{7}		&  67.21 				& 75.74 					& 79.81 					& 82.09						& 83.49	 					& 84.42 					& 85.07 					& 85.54 					&& 87.88\\
\textbf{10} 	&  619.28 			& 662.48 					& 680.93 					& 690.36 					& 695.77 					& 699.12 					& 701.34 					& 702.88 					&& 708.94\\\addlinespace
\bottomrule
\end{tabular}
\end{center}
\caption{Numerical upper bounds on the density threshold obtained from the Single-Flip Method (cf.~\thmrefs{singleflip-general}{singleflip-powerlaw}). Empty fields indicate unsatisfiability for {\em all} constant densities by \thmref{unsat}.
To the best of our knowledge, the bounds for uniform random SAT with $k\ge 4$ are the currently best known numerical upper bounds.
For $k=3$ the best known unconditional numerical upper bound is 
$4.4898$ \protect\cite{DBLP:journals/tcs/DiazKMP09}.}
\vspace*{-6mm}
\label{tab:flip}
\end{table}
In the remainder of this section, we show \thmref{singleflip-general} and \thmref{singleflip-powerlaw}.
\begin{definition}[Single-Flip Property]
For a random formula $\Phi$ a truth assignment $A$ has the \emph{single-flip property} iff $A$ satisfies $\Phi$ and every assignment $A'$ obtained from $A$ by flipping exactly one zero to one does \emph{not} satisfy $\Phi$.
\end{definition}

Let $N_{SF}$ be the number of truth assignments with the single-flip property for $\Phi$. As argued in \cite{KKKS98}, such an assignment exists if $\Phi$ is satisfiable.
From Markov's Inequality, we thus know 
$\Pr[\Phi\text{ satisfiable}] \le \Ex{N_{SF}}.$ 

In the following, we derive a bound on $\Ex{N_{SF}}$.
By \lemref{birthday},
the probability of choosing a clause $c$ is at most
\[\frac{k!}{2^k} \cdot \frac{\prod_{\ell \in c}{p_{\left|\ell\right|}}}{1-\tfrac12 k^2 \|\vec p \|^2_2}.\]
To bound the number of assignments with the single-flip property, we use the following result.
\begin{lemma}[\cite{KKKS98}] \label{lem:expectation1}
The expected number of assignments with the single-flip property is
\[\Ex{N_{SF}}=\left(1-\tfrac{1}{2^{k}}\right)^m\sum_{\text{assignment }A}{\Pro{A\text{ single-flip} \mid A\text{ satisfying}}}.\]
\end{lemma}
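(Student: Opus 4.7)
The plan is to prove this by combining two standard observations: linearity of expectation applied to the indicator that $A$ has the single-flip property, and the key structural fact that the negation step in \defref{randomSAT} makes the probability of satisfying a single random clause the same, namely $1 - 2^{-k}$, for \emph{every} assignment $A$.

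First I would expand $\Ex{N_{SF}}$ via linearity:
\[
\Ex{N_{SF}} \;=\; \sum_A \Pr[A \text{ is single-flip}].
\]
Since the single-flip property by definition requires $A$ to satisfy $\Phi$, the events ``$A$ single-flip'' and ``$A$ single-flip $\wedge$ $A$ satisfying'' coincide, and so the definition of conditional probability yields
\[
\Pr[A \text{ single-flip}] \;=\; \Pr[A \text{ satisfying}]\cdot \Pr[A \text{ single-flip}\mid A\text{ satisfying}].
\]

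Next I would compute $\Pr[A \text{ satisfies }\Phi]$ and show that it is exactly $(1-2^{-k})^m$ independent of the distribution $\vec p$. Since the $m$ clauses are sampled independently, it suffices to establish $\Pr[A \text{ satisfies }c] = 1 - 2^{-k}$ for a single random clause $c$. Condition on the (unordered) set $S$ of $k$ distinct variables selected in step~1 of \defref{randomSAT}. Given $S$, the $k$ literals in $c$ are determined by independent fair coin flips (step~2), so the clause is falsified by $A$ iff every literal evaluates to false, which happens with probability $2^{-k}$ regardless of $A$ and of $S$. Averaging over $S$ leaves $1 - 2^{-k}$. Note that the repeat-until-no-collision rule in step~1 is irrelevant here because the independence of negations holds conditionally on any fixed outcome of step~1.

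Substituting this uniform satisfaction probability back into the previous identity gives
\[
\Ex{N_{SF}} \;=\; (1-2^{-k})^m\sum_A \Pr[A\text{ single-flip}\mid A\text{ satisfying}],
\]
which is the claim. There is essentially no technical obstacle; the only subtlety to watch is the conditioning argument confirming that the per-clause satisfaction probability is independent of $A$ despite the non-uniform variable selection, so I would state that step carefully.
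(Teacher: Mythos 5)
Your proposal is correct and follows essentially the same route as the paper: both arguments hinge on the observation that, conditional on the selected variables, the independent fair negations make each clause falsified by any fixed assignment $A$ with probability exactly $2^{-k}$, so $\Pr[A\text{ satisfying}]=(1-2^{-k})^m$ independently of $\vec p$ and of $A$. The paper's proof is terser (it leaves the linearity-of-expectation and conditional-probability decomposition implicit), but the substance is identical.
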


\begin{proof}
Note that for a certain truth assignment $A$, the probability of choosing a clause which is not satisfied by $A$ is $\nicefrac{1}{2^k}$.
Therefore, the probability that $A$ is a satisfying assignment for $\Phi$ is exactly $\left(1-\tfrac{1}{2^k}\right)^m$.
\end{proof}
We next bound the probability that a satisfying assignment $A$ has the single-flip property.
\begin{lemma} \label{lem:conditional}
	For a satisfying assignment $A=\left(a_1, a_2, \ldots, a_n\right)\in{\left\{0,1\right\}^n}$ it holds that
	\begin{eqnarray*}
	\Pro{A\text{ single-flip}\ |\ A\text{ satisfying}}
	\le \prod_{i:\ a_i=0}{1-\left(1-\frac{k\cdot p_i}{2^k-1}\frac{1}{\left(1-\tfrac12 k^2 \|\vec p \|^2_2\right)}\right)^m}.
	\end{eqnarray*}
\end{lemma}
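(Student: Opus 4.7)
The plan is to reduce the single-flip event to one ``witness clause'' per $0$-coordinate of $A$, and then exploit the fact that these witnesses are mutually exclusive across coordinates. For each $i$ with $a_i=0$, call a clause \emph{critical for $i$} if it contains the literal $\bar{x}_i$ and $\bar{x}_i$ is the only literal of the clause satisfied by $A$. Flipping $a_i$ from $0$ to $1$ unsatisfies $A$ iff at least one clause is critical for $i$, so $A$ has the single-flip property iff, for every $i$ with $a_i=0$, at least one of $c_1,\ldots,c_m$ is critical for $i$.

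Next I would compute the per-clause conditional probability $q_i := \Pro{c\text{ critical for }i \mid c\text{ satisfied by }A}$. Since negations are drawn independently of variables, each of the $k$ literals is satisfied by $A$ with probability $\tfrac12$ regardless of its variable, so $\Pro{c\text{ satisfied}} = 1-2^{-k}$ exactly. For the numerator, I would invoke \eq{clausesample}: the probability that a random clause has $\bar{x}_i$ in a prescribed position and its remaining $k-1$ literals unsatisfied by $A$ is at most $(p_i/2)(1/2)^{k-1}$, inflated by the collision-correction factor $(1-\tfrac12 k^2 \|\vec p\|_2^2)^{-1}$ from \lemref{birthday}. Summing over the $k$ positions yields $q_i \leq k p_i / [(2^k-1)(1-\tfrac12 k^2 \|\vec p\|_2^2)]$, and hence by clause independence $\Pro{\text{some clause critical for }i \mid A\text{ satisfying}} \leq 1-(1-q_i)^m$.

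To combine these marginal bounds into the product displayed in the lemma, I would use negative association. The key observation is that a single clause can be critical for at most one index $i$, because its unique $A$-satisfied literal determines $i$ uniquely. Thus, for each clause $c$, the indicators $Y_{i,c}=\mathds{1}[c\text{ critical for }i]$ form a $\{0,1\}$-vector whose entries sum to at most one and are therefore negatively associated; independence across clauses preserves NA of the full family $\{Y_{i,c}\}$. Since each event $\{\exists c: Y_{i,c}=1\}$ is monotone and depends on the disjoint block $\{Y_{i,c}\}_c$, NA yields
\[
\Pro{A\text{ single-flip}\mid A\text{ satisfying}} \;=\; \Pro{\bigcap_{i:\,a_i=0}\{\exists c:Y_{i,c}=1\}} \;\leq\; \prod_{i:\,a_i=0}\bigl(1-(1-q_i)^m\bigr),
\]
which, after substituting the bound on $q_i$, is exactly the claimed inequality. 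The main obstacle is precisely this factorization step: one must argue that the per-coordinate witnesses give an \emph{upper} bound on the conjunction. The negative-association argument works because critical-for-$i$ events are mutually exclusive within a clause; without this exclusivity, a more delicate coupling or FKG-type argument would be required.
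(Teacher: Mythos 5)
Your proof is correct and follows the same skeleton as the paper's: the same reduction to per-coordinate witness clauses (your ``critical for $i$'' event is exactly the paper's event $S^i(c)$ that $c$ is satisfied by $A$ but not by the flipped assignment $A^i$), the same per-clause bound $q_i \le \frac{k\, p_i}{(2^k-1)\left(1-\tfrac12 k^2\|\vec p\|_2^2\right)}$ obtained by conditioning on the clause being $A$-satisfied, and the same crucial structural observation that a single clause can witness at most one coordinate because its unique $A$-satisfied literal determines $i$. The only divergence is the tool used for the final factorization: the paper invokes Farr's correlation inequality (via McDiarmid), instantiated with $X_v = i$ iff the $v$-th clause is satisfied by $A$ but not by $A^i$, whereas you derive the product bound from negative association of the mutually exclusive indicators $Y_{i,c}$, together with closure of NA under independent unions and under increasing functions of disjoint index blocks. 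These are interchangeable here --- Farr's inequality is precisely a statement about such ``each trial witnesses at most one colour'' structures --- so your route gains no strength but loses none either; it is arguably more self-contained for readers familiar with the Joag-Dev--Proschan properties. One small point of hygiene: the NA argument should be carried out conditionally on the event that every clause is satisfied by $A$ (under which the clauses remain independent and the within-clause exclusivity persists); your displayed probability silently drops this conditioning, though your use of $q_i$ in the substitution makes clear you intended it.
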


\begin{proof}
For a satisfying assignment $A$ to have the single-flip property, all assignments $A^i$ obtained by flipping a bit $a_i=0$ of $A$ must not satisfy $\Phi$.
To fulfill this property for $A^i$, we have to choose at least one clause which contains $\bar{X_i}$ and $k-1$ other variables with appropriate signs so that $A^i$ does not satisfy the clause. Let $S^i (c)$ denote the event that a clause $c$ is satisfied by $A$, but not by $A^i$. Then,
\begin{align*}
\Pr[S^i(c)] &= \frac{k!\cdot p_i\sum_{J\in{\mathcal{P}_{k-1}\left(\left[n\right]\setminus{\left\{i\right\}}\right)}}{\prod_{j\in J}p_j}}{2^k\left(1-\tfrac12 k^2 \|\vec p \|^2_2\right)} 
\le\frac{k\cdot p_i}{2^k\left(1-\tfrac12 k^2 \|\vec p \|^2_2\right)}
\end{align*}
since $\sum_{J\in{\mathcal{P}_{k-1}\left(\left[n\right]\setminus{\left\{i\right\}}\right)}}{\prod_{j\in J}p_j}\le\frac{\|\vec p \|^{k-1}_1}{(k-1)!}$.
The probability of choosing a clause not satisfied by $A^i$ under the condition that $A$ is satisfying is then
\[
\Pr[S^i(c) \mid A \text{ sat}] = \Pr[S^i(c) \mid A \text{ satisfies } c] 
\leq \frac{k\cdot p_i}{2^k-1}\frac{1}{\left(1-\tfrac12 k^2 \|\vec p \|^2_2\right)}
\]
as the probability of choosing a clause which is satisfied by any assignment is exactly $\frac{2^k-1}{2^k}$.
For a fixed assignment $A^i$ we conclude
\begin{align}
\Pro{A^i\text{ unsat}\ |\ A\text{ sat}}&= 1 -  \left( 1 - \Pr[S^i(c) \mid A \text{ sat}] \right)^m \nonumber\\
&\le 1-\left(1-\frac{k\cdot p_i}{2^k-1}\frac{1}{\left(1-\tfrac12 k^2 \|\vec p \|^2_2\right)}\right)^m. \label{eq:flipped-conditional}
\end{align}
It remains to find the joint probability that all single-flipped assignments $A^i$ for $1\le i \le n$ with $a_i=0$ are not satisfying.
We show this using a correlation inequality by Farr \cite{mcdiarmid1992}.
The sets of clauses which are not satisfied by the $A^i$'s are pairwise disjoint as each clause in the set for $A^i$ has to contain $\bar{X_i}$, whereas each clause in the set for $A^{j}$ ($j\neq i$) can not contain $\bar{X_i}$. In the context of the correlation inequality from \cite{mcdiarmid1992} we set $V=\left\{1,2,\ldots,m\right\}$, $I=\left\{i\in\left\{1,2,\ldots,n\right\}\ |\ a_i=0\right\}$, $X_v=i$ iff the $v$-th clause is satisfied by $A$, but not by $A^i$, and $\mathcal{F}_i$ the ``increasing'' collection of non-empty subsets of $V$.
The application of the Theorem then directly yields
\begin{align*}
\Pr[A \text{ single-flip} \mid A \text{ sat}] &= \Pro{\bigcap_{i:\ a_i=0} A^i\text{ unsat}\ |\ A\text{ sat}}  \\
&\le \prod_{i:\ a_i=0}{\left[1-\left(1-\frac{k\cdot p_i}{2^k-1}\frac{1}{\left(1-\tfrac12 k^2 \|\vec p \|^2_2\right)}\right)^m\right]}. \qedhere
\end{align*}
\end{proof}

Combining \lemrefs{expectation1}{conditional} we get that the expected number of assignments with single-flip property is at most
\begin{eqnarray*}
\Ex{N_{SF}} & \le & \left(1-\tfrac{1}{2^k}\right)^m\sum_{I\subseteq \left\{1,2,\ldots,n\right\}}{\prod_{i\in I}{\left[1-\left(1-\frac{k\cdot p_i}{2^k-1}\frac{1}{\left(1-\tfrac12 k^2 \|\vec p \|^2_2\right)}\right)^m\right]}}\\
& = & \left(1-\tfrac{1}{2^k}\right)^m\prod_{i=1}^{n}{\left[2-\left(1-\frac{k\cdot p_i}{2^k-1}\frac{1}{\left(1-\tfrac12 k^2 \|\vec p \|^2_2\right)}\right)^m\right]}.
\end{eqnarray*}
This establishes \thmref{singleflip-general}.

\section{Conditions for Satisfiability}
\label{sec:sat}

In this section, we provide a complementary result to \thmrefs{unsat}{singleflip-powerlaw} proving that if $\betabound$ and the clause-variable ratio $r = \tfrac mn$ does not exceed some small constant, then a random $k$-SAT formula with exponent $\beta$ is satisfiable with high probability. Let us first restate the main result:
\statethmsat
We show this statement by constructing an algorithm that satisfies $\Phi$ \whp\ if the clause-variable ratio is small. \algref{simple} contains a formal description. The main idea is to shrink all clauses to size $2$ by selecting the literals with smallest weight in each clause; and then running any well-known (polynomial time) 2-SAT algorithm (\eg\ \cite{aspvall1979linear}).

\renewcommand{\algorithmicrequire}{\textbf{Input:}}
\renewcommand{\algorithmicensure}{\textbf{Output:}}

\begin{algorithm}[t]
\caption{Clause Shrinking Algorithm}
\label{alg:simple}
\begin{algorithmic}[1]
\Require $k$-SAT formula $\Phi$; weight distribution $\vec w$
\ForAll{$c \in \Phi$}
  \State $\ell_1 \leftarrow \argmin_{\ell \in c} \{ w_{|\ell|} \}$
  \State $\ell_2 \leftarrow \argmin_{\ell \in c \setminus \{\ell_1 \} } \{ w_{|\ell|} \}$
  \State $c \leftarrow (\ell_1 \lor \ell_2)$
\EndFor
\State Solve $\Phi$ using any polynomial time 2-SAT algorithm 
\end{algorithmic}
\end{algorithm}

In the following, we seek to establish that \algref{simple} will find a satisfying assignment (for small constraint densities) with high probability. To this end, we first analyze the probability distribution of a clause $c$ after it has been shrunk.

\begin{lemma}
\label{lem:shrinked}
Let $\ell_1, \ell_2$ be the selected literals of an arbitrary clause $c \in \Phi$ in \algref{simple}. Then, 
\[ \Pr[|\ell_1| = i, |\ell_2|=j] + \Pr[|\ell_1| = j, |\ell_2|=i] \leq \Oh(\tfrac1{n^2} (w_iw_j)^{1-\frac12(k-2)(\beta-2)}). \]
\end{lemma}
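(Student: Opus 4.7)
The plan is to translate the event $\{|\ell_1|=i,\,|\ell_2|=j\}$ into a combinatorial condition on the $k$ variables of $c$, apply the clause-sampling formula~\eq{clausesample}, substitute the power law tail from~\eq{varsample}, and finally symmetrize the resulting bound in $w_i$ and $w_j$.

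\wlo I assume $w_i\le w_j$; the opposite case is symmetric and accounts for the second summand on the left-hand side. Under this assumption, $|\ell_1|=i$ and $|\ell_2|=j$ holds iff the variable set of $c$ has the form $\{i,j,u_1,\ldots,u_{k-2}\}$ with $w_{u_\alpha}\ge w_j$ for every $\alpha$. Summing~\eq{clausesample} over the $2^k$ sign patterns shows that each specific such variable set occurs with probability $(1+o(1))\,k!\,p_ip_j\prod_\alpha p_{u_\alpha}$; the $(1+o(1))$ factor is legitimate because $\beta>2$ implies $\|\vec p\|_2^2=o(1)$, as noted immediately after \corref{sumweights}. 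Summing over admissible unordered $(k-2)$-subsets (excluding $i,j$) and overestimating by the $(k-2)$-th power of the full 1-sum gives
\[ \Pr\bigl[|\ell_1|=i,\,|\ell_2|=j\bigr] \;\le\; O(1)\cdot p_ip_j \cdot \Bigl(\sum_{x:\,w_x\ge w_j} p_x\Bigr)^{\!k-2}. \]

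Next I invoke~\eq{varsample} to obtain $\sum_{x:\,w_x\ge w_j}p_x=\Pr[V\ge w_j]=\Theta(w_j^{2-\beta})$ and \corref{sumweights} to obtain $p_ip_j=\Theta(w_iw_j/n^2)$, producing
\[ \Pr\bigl[|\ell_1|=i,\,|\ell_2|=j\bigr] \;\le\; O\!\left(\frac{w_iw_j}{n^2}\,w_j^{(2-\beta)(k-2)}\right). \]
The final step is to turn the asymmetric factor $w_j^{(2-\beta)(k-2)}$ into a symmetric expression in $w_i,w_j$. Since $\beta>2$ and $k\ge 2$ the exponent $(2-\beta)(k-2)$ is non-positive; combined with $w_j\ge w_i>0$ this yields $w_j^{(2-\beta)(k-2)}\le(w_iw_j)^{(2-\beta)(k-2)/2}$, which produces exactly the claimed exponent $1-\tfrac12(k-2)(\beta-2)$ on $w_iw_j$. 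Adding the analogous bound from the swapped case multiplies the estimate by at most $2$, which is absorbed into the $O(\cdot)$.

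I expect the main subtlety to be the symmetrization in the last step, which relies on the sign of $(2-\beta)(k-2)$ being correct; everything else reduces to inserting the already-established tail estimates into the clause-sampling formula and keeping careful track of constant factors and the $(1+o(1))$ error from the birthday step.
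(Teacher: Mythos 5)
Your proposal is correct and follows essentially the same route as the paper: reduce to the case $w_i\le w_j$ (where the swapped event contributes nothing extra), observe that the remaining $k-2$ variables must have weight at least $w_j$, bound the resulting probability by $O(1)\cdot p_ip_j\cdot\Pr[V\ge w_j]^{k-2}=O(n^{-2}\,w_iw_j^{\,1-(k-2)(\beta-2)})$ via \eqref{eq:clausesample} and \eqref{eq:varsample}, and then symmetrize using $w_i\le w_j$ and the non-positivity of $(2-\beta)(k-2)$. The only cosmetic difference is that the paper notes the swapped term is exactly zero rather than paying a factor of $2$, which changes nothing inside the $\Oh(\cdot)$.
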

\begin{proof}
\Wlog, we assume that $w_i \leq w_j$. Then, $\Pr[|\ell_1| = j, |\ell_2| = i] = 0$ by the definition of \algref{simple}.
For the event $|\ell_1| = i,$ $|\ell_2|=j$ to happen, all other $k-2$ literals in the clause must be of larger weight. By \eqs{clausesample}{varsample},
\begin{align*}
\Pr[|\ell_1| = i, |\ell_2| = j] &= \frac12 \cdot \binom{k}{2} \cdot (1 + o(1))  \cdot p_i \cdot p_j \cdot \Pr[V \geq w_j]^{k-2} \\
&= \Theta(\tfrac{1}{n^2}) \cdot w_iw_j^{1- (k-2)(\beta-2)} \\
&\leq \Oh(\tfrac1{n^2}) \cdot  (w_iw_j)^{1-\frac12(k-2)(\beta-2)}.
\end{align*}
The last statement holds since $w_i \leq w_j$.
\end{proof}
Having derived a bound on the probability distribution of a shrunk clause, it is possible to compute the probability that the resulting $2$-SAT formula is satisfiable. We use that the clauses are sampled independently. To avoid confusion, we write $\Phi'$ and $c'$, whenever we talk about the shrunk formula and clauses. To upper bound the probability of $\Phi$ not being satisfiable, we look at so-called {\em bi-cycles} in $\Phi'$.

\begin{definition}
A {\em bi-cycle} of length $l$ is a sequence of $l+1$ clauses of the form 
\[\left(u,\ell_1\right),\left(\bar{\ell}_1,\ell_2\right),\ldots,\left(\bar{\ell}_{l-1},\ell_l\right),\left(\bar{\ell}_l,v\right),\]
where $\ell_1,\ldots,\ell_l$ are literals of distinct variables and $u,v\in\left\{\ell_1,\ldots,\ell_l,\bar{\ell}_1,\ldots,\bar{\ell}_l\right\}$.
\end{definition}
Chvatal and Reed~\cite[Theorem~3]{chvatalreed92} show that if the formula $\Phi'$ is unsatisfiable, it must contain a bi-cycle. Consequently, by upper bounding the probability that a bi-cycle appears, we immediately obtain an upper bound on the probability that $\Phi'$ and henceforth $\Phi$ is unsatisfiable. 
\begin{theorem}[\cite{chvatalreed92}]
Let $\Phi'$ be any $2$-SAT formula. If $\Phi'$ contains no bi-cycle, it is satisfiable.
\end{theorem}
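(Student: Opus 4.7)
The plan is to prove the contrapositive of the stated theorem: if $\Phi'$ is unsatisfiable, then $\Phi'$ must contain a bi-cycle. The approach is to work inside the implication digraph of $\Phi'$ and read off a bi-cycle from an appropriately chosen directed closed walk.

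First I would build the implication digraph $D(\Phi')$ on the $2n$ literals by placing, for each clause $(a \vee b) \in \Phi'$, the two arcs $\bar a \to b$ and $\bar b \to a$, both labelled by that clause. This digraph is skew-symmetric: $u \to v$ is an arc iff $\bar v \to \bar u$ is one, mirroring the logical contrapositive. The classical $2$-SAT characterization of Aspvall, Plass, and Tarjan then says $\Phi'$ is unsatisfiable iff there exists a variable $x$ such that $x$ and $\bar x$ belong to the same strongly connected component of $D(\Phi')$. Fix such an $x$ and directed paths $P_1 \colon x \to \bar x$ and $P_2 \colon \bar x \to x$; concatenating yields a closed walk $W \colon x = m_0 \to m_1 \to \dots \to m_s = x$ in which $m_r = \bar x$ at some interior index $r$.

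The candidate bi-cycle then reads directly off $W$: each arc $m_i \to m_{i+1}$ was labelled by a clause $(\bar m_i, m_{i+1}) \in \Phi'$, so the sequence of labels along $W$ is
\[ (\bar x, m_1),\ (\bar m_1, m_2),\ \dots,\ (\bar m_{s-1}, x), \]
which matches the bi-cycle template $(u, \ell_1), (\bar \ell_1, \ell_2), \dots, (\bar \ell_l, v)$ with $u = v = x$, $\ell_i := m_i$, and $l := s - 1$. The endpoint membership $u, v \in \{\ell_1, \dots, \ell_l, \bar \ell_1, \dots, \bar \ell_l\}$ holds automatically, since the interior literal $m_r = \bar x$ makes $u = v = x = \bar{m}_r \in \{\bar \ell_1, \dots, \bar \ell_l\}$.

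The main obstacle, and where I expect the bulk of the technical work to lie, is enforcing the remaining bi-cycle requirement that $\ell_1, \dots, \ell_l$ correspond to \emph{distinct variables}. I would choose $W$ of minimum total length among all closed walks through both $x$ and $\bar x$, and then show by local reduction that such a minimum walk cannot contain a repeated interior variable. If two interior literals coincide, $m_i = m_j$ with $i < j$, the segment between them is excised to produce a strictly shorter closed walk still visiting $x$; when $\bar x$ happens to sit inside the excised segment, one excises the complementary side and reorients instead, still preserving both $x$ and $\bar x$. The more delicate subcase is a complementary pair $m_j = \bar m_i$, which I would handle via the skew-symmetric involution of $D(\Phi')$: the segment $m_i \to \dots \to m_j$ and its reversed-negated image have the same endpoints, and splicing the shorter of the resulting alternatives into $W$ yields a strictly shorter closed walk with the same traversal of $x$ and $\bar x$, again contradicting minimality. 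Driving this case analysis to completion is the heart of the argument; once the distinctness of interior variables is secured, the clauses along the minimised $W$ form exactly the bi-cycle required.
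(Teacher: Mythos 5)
First, a remark on scope: the paper does not prove this statement at all --- it is imported verbatim from Chv\'atal and Reed --- so there is no internal proof to compare against, and your attempt has to stand on its own. Your general strategy (implication digraph, the Aspvall--Plass--Tarjan criterion, reading the bi-cycle off a walk whose interior is variable-distinct and whose endpoints fold back into the interior) is the natural one and the right reduction. A small slip first: matching the first clause $(\bar x\vee m_1)$ with the template $(u\vee\ell_1)$ gives $u=\bar x$ and $v=x$, not $u=v=x$; the membership condition still holds via $m_r=\bar x$, so this is cosmetic.

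The genuine gap is the core claim that a \emph{minimum-length closed walk through $x$ and $\bar x$} has variable-distinct interior; this is false, and the surgery you sketch cannot repair it. Consider the unsatisfiable formula $(\bar u\vee g)\wedge(\bar g\vee\bar u)\wedge(u\vee g)\wedge(\bar g\vee u)$ on two variables. Its implication digraph is bipartite between $\{u,\bar u\}$ and $\{g,\bar g\}$ with all eight arcs, so every closed walk through a literal and its complement has even length, hence length at least $4$, hence at least three interior literals on only two variables --- a variable repeat in the interior is unavoidable even for the minimal walk. Concretely, for $u\to g\to\bar u\to g\to u$ your first surgery case excises the segment between the two copies of $g$ and thereby removes $\bar u$; the ``complementary side'' $g\to\bar u\to g$ is a closed walk through $g$ and $\bar u$ but not through any literal \emph{together with} its complement, and its skew-symmetric image has the same defect, so no reorientation restores the invariant (nor can it, since the walk was already minimal). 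Your second surgery case fails for a structural reason: the reversed-negated image of a segment from $m_i$ to $m_j=\bar m_i$ is again a walk from $m_i$ to $\bar m_i$ of \emph{the same length}, so there is no ``shorter alternative'' to splice in. In the example the bi-cycle is the \emph{open} walk $u\to g\to\bar u\to g$, i.e.\ the clauses $(\bar u\vee g),(\bar g\vee\bar u),(u\vee g)$, obtained by dropping the last arc of the closed walk; so the correct object to minimize over is open walks both of whose endpoint variables recur in the interior, and extracting one with variable-distinct interior requires a different argument (for instance, analyzing where variable repetitions can sit on a shortest $x\rightsquigarrow\bar x$ path and then folding the endpoints back in) rather than the closed-walk excision you propose.
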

Before we are able to prove the main Theorem, we need the following auxiliary Lemma.
\begin{lemma}\label{lem:sum}
Let $\beta = \delta + 1 + \eps$ for some $\eps > 0$. For all $1 \le l \le n$, there is a constant $c$ with 
\begin{equation*} \sum_{\substack{S\subseteq\left[n\right]\colon\\ \left|S\right|=l}}{\prod_{i\in S}{w_i^\delta}}\le n^{l} \cdot c^{l} \tfrac1{l!}. \end{equation*}
\end{lemma}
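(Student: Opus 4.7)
The plan is to reduce everything to the elementary bound
\[
\sum_{\substack{S\subseteq[n]\\|S|=l}}\prod_{i\in S} w_i^\delta \;\le\; \frac{1}{l!}\Bigl(\sum_{i=1}^n w_i^\delta\Bigr)^{l},
\]
which holds because expanding the $l$-th power gives all ordered $l$-tuples (with repetition), and the subset sum on the left only counts unordered tuples with distinct entries, each $l!$ times. Once this is in place, it suffices to show that
\[
\sum_{i=1}^n w_i^\delta \;=\; \Oh(n),
\]
so that the right-hand side is at most $n^l c^l/l!$ for some constant $c$ independent of $l$. The assumption $\eps>0$ (equivalently, $\delta<\beta-1$) is what makes this moment finite.

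To bound $\sum_i w_i^\delta$ I would apply \thmref{karl} with $f(w):=w^\delta$, $\underline w=w_1=\Theta(1)$ and $\bar w = w_n$:
\[
\tfrac1n\sum_{i=1}^n w_i^\delta
\;=\; w_1^\delta F(w_1) \;-\; w_n^\delta F^{>}(w_n) \;+\; \int_{w_1}^{w_n} \delta w^{\delta-1} F(w)\dif w.
\]
The first boundary term is $\Oh(1)$ since $w_1=\Theta(1)$ and $F(w_1)=1$; the second term is nonpositive and can be dropped for the upper bound. Using \eq{sandwhich}, the integrand is at most $\alpha_2\,\delta\, w^{\delta-\beta}$, so the integral is bounded by
\[
\alpha_2\,\delta\int_{w_1}^{w_n} w^{\delta-\beta}\dif w
\;=\; \alpha_2\,\delta\int_{w_1}^{w_n} w^{-1-\eps}\dif w
\;\le\; \frac{\alpha_2\,\delta}{\eps}\, w_1^{-\eps} \;=\; \Oh(1).
\]
Therefore $\sum_i w_i^\delta\le c\,n$ for a constant $c=c(\alpha_2,\beta,\eps,w_1)$, and plugging this into the initial elementary bound gives $\sum_{|S|=l}\prod_{i\in S} w_i^\delta \le (cn)^l/l! = n^l c^l/l!$, as claimed.

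The only real subtlety is the convergence of the tail integral: this is exactly where the slack $\eps>0$ in $\beta=\delta+1+\eps$ matters. If one instead had $\delta\ge\beta-1$, the integral would diverge polynomially in $w_n=\Theta(n^{1/(\beta-1)})$, and the $\sum_i w_i^\delta=\Oh(n)$ estimate would fail, so the statement would no longer hold with a constant $c$ independent of $n$. Everything else is routine, and the bound is uniform in $l$ because $c$ comes purely from the weight distribution.
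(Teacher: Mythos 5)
Your proof is correct, and it takes a genuinely different and substantially simpler route than the paper. The paper orders the elements of $S$ increasingly by weight, rewrites the left-hand side as $l$ nested sums, and bounds these by an induction in which each level is estimated via \thmref{karl}, with some delicate bookkeeping of the error terms arising at the upper integration limit $w_n$. You instead invoke the standard comparison between the elementary symmetric polynomial and the power sum, $\sum_{|S|=l}\prod_{i\in S}a_i \le \frac{1}{l!}\bigl(\sum_i a_i\bigr)^l$ for nonnegative $a_i$, which reduces the whole lemma to the single first-moment estimate $\sum_i w_i^\delta = \Oh(n)$; that estimate is exactly where the hypothesis $\delta < \beta-1$ enters, in both proofs. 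Your route avoids the induction entirely and makes the uniformity of $c$ in $l$ transparent, and I see nothing that the paper's heavier machinery buys here, since both arguments land on a bound of the same form $n^l c^l/l!$. One small point of hygiene: your inequality ``the integrand is at most $\alpha_2\,\delta\,w^{\delta-\beta}$'' reverses direction when $\delta<0$ (the lemma is applied with $\delta=2-(k-2)(\beta-2)$, which can have either sign), but in that case the integral term of \thmref{karl} is nonpositive and can simply be dropped, and the boundary term $w_1^\delta F(w_1)=\Oh(1)$ still gives $\sum_i w_i^\delta=\Oh(n)$; it would be worth stating that case split explicitly.
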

\begin{proof}
We begin by observing that the term on the left side of the equation is obviously monotone in $w_i$: If $\delta \geq 0$ ($\delta < 0$), then increasing (decreasing) $w_i$ increases the sum. Thus, instead of considering the true distribution function $F(w)$, we may consider the upper (lower) bound on $F(w)$, see \eq{sandwhich}. For the sake of brevity, we consider the distribution $\widehat F(w) = \alpha w^{1-\beta}$, where $\alpha$ is chosen to be either $\alpha_1$ if $\delta \geq 0$, or $\alpha_2$ otherwise.

To estimate this sum, we arrange the elements of $S$ increasingly by weight, such that $w_{s_1} < w_{s_2} < \ldots < w_{s_l}$.
This gives us
\[\sum_{\substack{S\subseteq\left[n\right] \\ \left|S\right|=l}}{\prod_{i\in S}{w_i^\delta}} = \sum_{s_1=1}^{n-l+1}{\left(w_{s_1}^\delta\sum_{s_2=s_1+1}^{n-l+2}{\left(w_{s_2}^\delta\ldots\sum_{s_l=s_{l-1}+1}^{n}{w_{s_l}^\delta}\right)}\right)}.\]
We are now inductively estimating these sums, beginning with the innermost. Recall that $\widehat F(w) = \alpha w^{1-\beta}$. Let $d$ be a large enough constant. We establish the following induction hypothesis:
\begin{multline*}
\sum_{s_{l-i} = s_{l-i-1} + 1 }^{ n-i } \left(w_{s_{l-i}}^\delta \sum_{s_{l-i+1}= s_{l-i} + 1}^{n-i+1} \left(w_{s_{l-i+1}}^\delta\ldots\sum_{s_l=s_{l-1}+1}^{n}w_{s_l}^\delta\right)\right) \le \\
n^{i+1} \cdot w_{s_{l-i-1}}^{(i+1) \cdot (\delta+1-\beta)} \cdot d^{i+1} \tfrac1{(i+1)!}
\end{multline*}
Now we apply \thmref{karl} to prove the induction basis. For $i=0$, we have
\begin{align*}
\sum_{s_l=s_{l-1}+1}^{n}w_{s_l}^\delta &\leq n \cdot \alpha w_{s_{l-1}}^{\delta +1-\beta} + n \int_{w_{s_{l-1}}}^{w_n} \alpha \delta w^{\delta-\beta} \dif w\\
&\leq n \cdot \alpha w_{s_{l-1}}^{\delta +1-\beta} + n \cdot \tfrac{\alpha \delta}{\beta-\delta-1} w_{s_{l-1}}^{\delta+1-\beta} \\
&= n \cdot \tfrac{\alpha(\beta-1)}{\beta-\delta-1} w_{s_{l-1}}^{\delta +1-\beta} 
\end{align*}
as desired, since $\beta>\delta +1$.

Now suppose the induction hypothesis holds for $i-1$.
For $i$ we get
\begin{multline}
\sum_{s_{l-i} = s_{l-i-1} + 1 }^{ n-i } \left(w_{s_{l-i}}^\delta \sum_{s_{l-i+1}= s_{l-i} + 1}^{n-i+1} \left(w_{s_{l-i+1}}^\delta\ldots\sum_{s_l=s_{l-1}+1}^{n}w_{s_l}^\delta\right)\right) \le \\  
n^{i} \cdot d^{i} \tfrac1{i!} \sum_{s_{l-i} = s_{l-i-1} + 1 }^{ n-i } w_{s_{l-i}}^{i \cdot (\delta+1-\beta) + \delta} \label{eq:induction}
\end{multline}
To bound the sum, we distinguish two cases. If $i(\delta + 1 - \beta) + \delta \geq 0$, then
\begin{align*}
&\tfrac1n \sum_{s_{l-i} = s_{l-i-1} + 1 }^{ n-i } w_{s_{l-i}}^{i \cdot (\delta+1-\beta) + \delta} \\
\leq{}& \alpha w_{s_{l-i-1}}^{(i+1) \cdot (\delta+1-\beta)} + \int_{w_{s_{l-i-1}}}^{w_n} \alpha (i \cdot (\delta+1-\beta) + \delta) w^{i \cdot (\delta+1-\beta) + \delta - \beta} \dif w  \\
={}& \alpha w_{s_{l-i-1}}^{(i+1) \cdot (\delta+1-\beta)} + \left[\tfrac{\alpha (i \cdot (\delta+1-\beta) + \delta)}{(i+1) \cdot (\delta+1-\beta)} w^{(i+1) \cdot (\delta+1-\beta)} \right]_{w_{s_{l-i-1}}}^{w_n}  \\
={}& \left( \alpha + \tfrac{\alpha (i \cdot (\delta+1-\beta) + \delta)}{(i+1)(\beta-\delta-1)} \right) w_{s_{l-i-1}}^{(i+1) \cdot (\delta+1-\beta)} - \tfrac{\alpha (i \cdot (\delta+1-\beta) + \delta)}{(i+1)(\beta-\delta-1)} w_n^{(i+1) \cdot (\delta+1-\beta)} \\
={}& \tfrac{\alpha (\beta-1)}{(i+1)(\beta-\delta-1)} w_{s_{l-i-1}}^{(i+1) \cdot (\delta+1-\beta)} \cdot \left( 1 - \tfrac{i \cdot (\delta + 1 - \beta) + \delta}{\beta-1} (\tfrac{w_n}{w_{s_{l-i-1}}})^{(i+1)\cdot (\delta + 1 - \beta)} \right).
\end{align*}
For the integration, we need to make sure that the special case $-1 = i \cdot (\delta+1-\beta) + \delta - \beta$ does not occur. By rearranging, one can see that this is only true for $i = -1$, however, our weights begin at $i=1$. 

We now bound the error term that appears from the integration limit $w_n$. Note that we only need to consider the case where $i (\delta + 1 - \beta) < - \delta$, otherwise the error term is smaller than $1$ and may simply be omitted.

Observe from \eq{induction} that $w_{s_{l-i-1}} \leq w_{n-i}$. Further, by \eq{sandwhich} we have that $w_n = \Theta(n^{\frac{1}{\beta-1}})$. Similarly,  
\begin{equation}
\tfrac in = \widehat F(w_{n-i}) = \alpha w_{n-i}^{1-\beta},  
\end{equation}
therefore $w_{n-i} = \Theta(1) \cdot (\tfrac{n}{i})^{\frac1{\beta-1}}$. Therefore, we have
\begin{equation}
\frac{w_n}{w_{s_{l-i-1}}} \leq \frac{w_n}{w_{n-i}} = \Theta(i^{\frac{1}{\beta-1}}).
\end{equation}
Recall that we are in the case where the error term is positive; and that the exponent $(\delta + 1 -\beta)$ is negative. Substituting the above inequality, we obtain
\begin{align*}
1 - \tfrac{i \cdot (\delta + 1 - \beta) + \delta}{\beta-1} (\tfrac{w_n}{w_{s_{l-i-1}}})^{(i+1)\cdot (\delta + 1 - \beta)} 
&\leq 1 - \tfrac{i \cdot (\delta + 1 - \beta)}{\beta-1} i^{\frac{i+1}{\beta-1} \cdot (\delta + 1 - \beta)} \\
&= 1 - \tfrac{\delta+1-\beta}{\beta-1} i^{1 + \frac{i+1}{\beta-1} \cdot (\delta+1-\beta)}
\end{align*}
By inspecting the exponent $1 + \tfrac{i+1}{\beta-1} \cdot (\delta + 1 - \beta)$, we observe that it is of order $\Oh(1)$. In particular, once $i$ is a large enough constant, the exponent becomes negative. Therefore, we may conclude that
\begin{equation}
1 - \tfrac{\delta+1-\beta}{\beta-1} i^{1 + \frac{i+1}{\beta-1} \cdot (\delta+1-\beta)} = \Oh(1),
\end{equation}
where the constant is not dependent on the iteration $i$. Thus, as $d$ was chosen large enough, 
\begin{equation}
\tfrac1n \sum_{s_{l-i} = s_{l-i-1} + 1 }^{ n-i } w_{s_{l-i}}^{i \cdot (\delta+1-\beta) + \delta} \leq \tfrac{d}{i+1} w_{s_{l-i-1}}^{(i+1) \cdot (\delta + 1 - \beta)},
\end{equation}
Plugging this into inequality~\eqref{eq:induction} proves the induction step.

Choosing $i=l-1$ and setting $s_0=0$ yields
\begin{align*}
\sum_{\substack{S\subseteq\left[n\right]\colon\\ \left|S\right|=l}}{\prod_{i\in S}{w_i^\delta}}&\le n^{l} \cdot w_1^{l \cdot (\delta+1-\beta)} \cdot d^{l} \tfrac1{l!}.
\end{align*}
Since $w_1^{\delta + 1 - \beta} = \Theta(1)$, we can choose an appropriate constant $c$ such that the statement holds.
\end{proof}
We are now able to show \thmref{sat}. As discussed above, we do this by upper bounding the probability that a bi-cycle appears in $\Phi'$. To this end, we calculate the expected number of bi-cycles in $\Phi'$, observe that it is $\mathrm{poly}(n)^{-1}$, and apply Markov's inequality. This yields that \whp, $\Phi'$ and thus $\Phi$ are satisfiable. 
\begin{proof}[Proof of \thmref{sat}]
We calculate the expected number of bi-cycles in $\Phi'$. First, we fix a set $S\subseteq\left[n\right]$ of $l\ge 2$ variables to appear in a bi-cycle.
Let $X_B$ denote the random variable counting how many times a \emph{specific} bi-cycle $B$ with the variables from $S$ appears in $F$.
Then
\begin{align*}
\mathbb{E}\left[X_B\right]
\le{}&\binom{m}{l+1}(l+1)!\cdot \Pr[u \lor x_1] \Pr[\bar{x_l} \lor v] \cdot\prod_{i=1}^{l-1}{\Pr[\bar{x_i} \lor x_{i+1} ]}.
\end{align*}
The factor $\binom{m}{l+1}(l+1)!$ counts the possible positions of $B$ in~$F$.
By \lemref{shrinked},
\begin{align*}
\mathbb{E}\left[X_B\right] \le m^{l+1}\cdot \left(\tfrac{c_1}{n^2}\right)^{l+1}\cdot\left( w_{|u|} w_{|v|} \prod_{i\in S}w^2_{i} \right)^{{1-\frac12(k-2)(\beta-2)}}\label{eq:expB}
\end{align*}
for some suitable constant $c_1$. Now let $X_S$ denote the random variable counting how many times \emph{any} bi-cycle with the variables from $S$ appears in $F$.
There are $l!$ permutations of the $l$ variables; and $2^l$ combinations of literals on $l$ variables. 
Similarly, literals $u$ and $v$ have $4$ possible sign combinations.
Thus,
\[\mathbb{E}\left[X_S\right]\le m^{l+1}\cdot l!\cdot 2^l\cdot\left(\tfrac{c_1}{n^2}\right)^{l+1}\cdot4\left(\sum_{i\in S}w_i^{{1-\frac12(k-2)(\beta-2)}}\right)^2\prod_{i\in S}{w_i^{2-(k-2)(\beta-2)}}.\]
To estimate the sum, we upper bound $w_i \le w_n$ for all sets up to a certain size $l_0$, which we will determine later. We set $\delta := 2 - (k-2)(\beta-2)$ and define $\alpha(l)$ as
\[\left(\sum_{i\in S}w_i^{\delta/2}\right)^2 \le \alpha(l):=
\begin{cases}
\Oh(l^2), \quad &\text{ if $\delta \leq 0$,} \\
l_0^2\cdot w_n^{\delta}, \quad &\text{ if } \delta > 0 \text{ and } l\le l_0, \\
\Oh(n^2), \quad &\text{ otherwise.}
\end{cases}\]
Now let $X$ denote the random variable counting the number of bi-cycles that appear in $F$.
\vspace{-1mm}
\begin{equation*}
\Ex X \le \sum_{l=2}^{n}{2^{l+2} \cdot m^{l+1}\cdot l!\cdot (\tfrac{c_1}{n^2})^{l+1}  \cdot\alpha(l)\sum_{\substack{S\subseteq\left[n\right] \\ \left|S\right|=l}}{\prod_{i\in S}{w_i^{\delta}}}}.
\end{equation*}
Since $\delta+1 = 2-(k-2)(\beta-2) + 1 < \beta$ by our assumption $\betabound$, we can apply Lemma~\ref{lem:sum}. Using $r:=m/n$, we obtain that the right-hand side is at most
\vspace{-1mm}
\begin{align}
\Ex X &\leq \sum_{l=2}^{n}2^{l+2} \cdot m^{l+1}\cdot l!\cdot (\tfrac{c_1}{n^2})^{l+1}  \cdot\alpha(l) \cdot n^{l}  \cdot c^{l} \tfrac1{l!}  
\leq \tfrac 1n \sum_{l=2}^{n} c_2^l \cdot r^{l} \cdot  \alpha(l), \label{eq:bicycle}
\end{align} 
for some suitable constant $c_2$. Since $r$ is a small enough constant we thus have $c_2 \cdot r < 1$. If $\delta \leq 0$, we are finished, since then
\vspace{-1mm}
\[ \tfrac 1n \sum_{l=2}^{n} c_2^l \cdot r^{l} \cdot  \alpha(l) \leq \tfrac 1n \sum_{l=2}^{n} (c_2\cdot r)^{l} \cdot  l^2 \leq \Oh(\tfrac 1n).\]
Otherwise, if $\delta>0$, we choose
$l_0:=-4\cdot\ln^{-1}(c_2 r)\ln(n),$
which ensures $(r\cdot c_2)^l = \Oh( n^{-4})$ for all $l> l_0$.
For $l=2,\ldots,l_0$, equation~\eqref{eq:bicycle} sums up to at most 
\vspace{-1mm}
\[\tfrac 1n \sum_{l=2}^{l_0} (c_2r)^l \cdot l_0^2 \cdot w_n^{\delta} = \Oh(\log^3(n) \cdot n^{1 - k\frac{\beta-2}{\beta-1}}),\]
where we substituted $w_n = \Theta(n^{\frac{1}{\beta-1}})$ and $\delta = 2 - (k-2)(\beta-2)$. Since $\betabound$, the exponent $1 - k\frac{\beta-2}{\beta-1} < -\eps'$ is negative, and we thus have 
\vspace{-1mm}
\[\Ex X \leq \tfrac 1n \sum_{l=2}^{n} c_2^l r^{l}  \alpha(l) \leq \Oh(\log^3(n) \cdot n^{-\eps'}) + \Oh(\tfrac1{n}),\]
which proves the Theorem by Markov's inequality.
\end{proof}

\section{Discussion of the Results}
In this work, we have shown that with high probability, a power law random $k$-SAT formula is satisfiable, if
$\beta \geq \tfrac{2k-1}{k-1} + \eps$
and the clause-variable ratio is not too large; and that it is unsatisfiable if $\beta \leq \tfrac{2k-1}{k-1} - \eps$, or if the clause-variable ratio is too large. Here, we give a few observations following these results.

First, as explained in Section 1 our results translate directly to the model where clause lengths are power law distributed. This observation might help to explain a phenomenon that arose in \cite{AnsoteguiBL09}: The authors experimentally observed that a random-sat formula with double power law distribution (both variables and clause lengths are drawn from a power law) can be solved extremely fast by MiniSAT. Although the formula was of length $5 \cdot 10^5$, MiniSAT already gave an answer after $4$ seconds! Using our results, we are now able to provide a potential explanation for this phenomenon: Disregarding the double power law distribution, the smallest clause length $k_{\min}$ occurring in their generated formulas is one. Thus, there will be $\Theta(n)$ clauses of length one and by \thmref{unsat} the formula is likely unsatisfiable. 

Second, we observe a sharp threshold in the sense of Friedgut~\cite{Friedgut1999thresholds} (for small constraint densities $r$) for $\beta$ at the point $\tfrac{2k-1}{k-1}$. In contrast, it is unclear whether such a sharp threshold exists (and can be analytically derived) for fixed $\beta$ but variable $r$. Considering however, that decades of research were dedicated to the same question in the uniform case---an arguably simpler model---it is unlikely that we obtain a satisfying answer any time soon; at least for all $k$. As in the uniform model, however, it might be more tractable to get sharp thresholds for $k \to \infty$.

\bibliography{arxiv}

\end{document}